\begin{document}  
\title{
Nonparametric Steady-State Learning for \\ Nonlinear Output Feedback Regulation
} 
\author{Shimin Wang, Martin Guay, Richard~D.~Braatz 

\thanks{This research was supported by the U.S. Food and Drug Administration under the FDA BAA-22-00123 program, Award Number 75F40122C00200. \\
Shimin Wang and Richard D. Braatz are with Massachusetts Institute of Technology, Cambridge, MA 02142, USA (bellewsm@mit.edu, braatz@mit.edu). \\
Martin Guay is with 
Queen's University, Kingston, ON K7L 3N6, Canada (martin.guay@queensu.ca).\\
Corresponding author: Richard~Braatz}}

\maketitle

\begin{abstract}
This article addresses the nonparametric and robust output regulation problem of the general nonlinear output feedback system with error output. 
The global robust output regulation problem for a class of general output feedback nonlinear systems with an uncertain exosystem and high relative degree can be tackled by constructing a linear generic internal model, provided that a continuous nonlinear mapping exists. 
Leveraging the proposed nonadaptive framework facilitates the conversion of the nonlinear robust output regulation problem into a robust nonadaptive stabilization formulation for the augmented system endowed with Input-to-State Stable dynamics. 
This approach removes the need for constructing a specific Lyapunov function with positive semidefinite derivatives and avoids the common assumption of linear parameterization of the nonlinear system.
The nonadaptive approach is extended by incorporating the nonparametric learning framework to ensure the feasibility of the nonlinear mapping, which can be tackled using a data-driven method. 
Moreover, the introduced nonparametric learning framework allows the controlled system to learn the dynamics of the steady-state input behavior from the signal generated from the internal model with the output error as the feedback. 
As a result, the nonparametric approach can be advantageous to guarantee the convergence of the estimation and tracking error even when the underlying controlled system dynamics are complex or poorly understood.
{The effectiveness of the theoretical results is illustrated for three practical examples: regulation of a magnetic levitation system, regulation of a virtual synchronous generator, and heading control of a surface vessel.\myr}
\end{abstract}
\begin{IEEEkeywords} Output regulation, Model-free, Nonparametric learning,  virtual synchronous generator, Data-driven regulation, magnetic levitation system, heading control, Internal model
\end{IEEEkeywords}

\section{Introduction}
Output regulation is an essential control systems design problem \citep*{isidori1990output,sutton2002reinforcement} that aims to have a system track a class of desired signals while rejecting external disturbances \citep*{marconi2008uniform,isidori1990output,huang2004nonlinear}. 
Accordingly, the desired signals and external disturbances can be lumped together as exogenous signals that are generated by an autonomous differential equation called the exosystem. 
Various formulations of the output regulation problems have been investigated in the past decade, such as linear systems in \cite{francis1976internal} and nonlinear systems in \cite{marconi2008uniform} and \cite{huang2004nonlinear} with or without uncertainties in the exosystem. 
Feedforward and feedback control are widely employed generic schematics for addressing output regulation problems.
In terms of feedforward control, the results in \cite{isidori1990output} showed that the output regulation of nonlinear systems can be solved by a feedforward control synthesized from specific solvable nonlinear partial differential equations called nonlinear regulator equations.
The solvability of the nonlinear output regulation using feedforward control strictly relies on the perfect knowledge of the plant and the exosystem dynamics in the absence of uncertainties. 
%

The pivotal technique employed to handle uncertainties and to solve output regulation problems in feedback control system design is the internal model principle. In this formulation, the internal model can be interpreted as an observer of the steady-state generator providing online estimates of the steady-state input.
Output regulation in linear dynamical systems was shown to be achievable by solving pole assignment problems in \cite{francis1976internal}. Since the steady-state tracking error is a linear function of the exogenous signals, the resulting linear internal model has poles placed at the poles of the original exosystem.
Regarding nonlinear output regulation, \cite{huang1994robust} revealed that the steady-state tracking error in a nonlinear system is a nonlinear function of the exogenous signals.
As a result, the feedforward control and the linear internal model become invalid in the presence of unknown parameters and nonlinearities arising from the controlled system plant and exosystem.

Multifarious versions of the internal models for various nonlinear system dynamics and diverse exosystems have been extensively provided over the last decades, such as the canonical linear internal model \citep*{nikiforov1998adaptive,serrani2001semi}, nonlinear internal models \citep*{byrnes2003limit,huang2004general}, and general or generic internal models \citep*{marconi2008uniform,wang2023nonparametric}. 
The canonical linear internal model has been successfully applied in solving heterogeneous nonlinear output regulation problems for known and uncertain linear exosystems with adaptive methods \citep*{huang2004nonlinear,basturk2015adaptive,serrani2001semi,marino2003output}, and even recently was employed to address the disturbance rejection problem of Euler–Lagrange systems  \cite{lu2019adaptive}. 
To handle nonlinear exosystems subject to more general exosystem non-sinusoidal signal classes in the absence of uncertainties, two different nonlinear internal models were introduced by \cite{byrnes2003limit} and \cite{huang2004general}. 
To remove various assumptions on the steady-state input, a generic internal model for addressing output regulation of minimum and non-minimum phase nonlinear systems was initially proposed in \cite{marconi2008uniform}.
More articles and reviews related to internal models in control theory, bioengineering, and neuroscience are cited in \cite{huang2018internal}.

In terms of the generic internal model, as pointed out by \cite{huang2018internal}, a significant advantage of \cite{marconi2008uniform} is that the generic internal model does
not rely on any specific expression of steady-state input as long as the steady-state generator exists.
In addition, the nonlinear regulator of the generic internal model ensures robust asymptotic regulation against unstructured uncertainties, as shown in \cite{bin2024robust}.
Moreover, the generic internal model can directly provide the unknown parameters arising from
the exosystem, eliminating the need for the adaptive control technique and having a significant advantage compared to the canonical linear internal model \citep*{nikiforov1998adaptive}.
In fact, the adaptive control approach faces two challenges that impede further research on the adaptive output regulation problem. 
Firstly, it requires the construction of a specific Lyapunov function for the nonlinear time-varying adaptive system with a positive semidefinite derivative to ensure the convergence of the partial state by using the LaSalle–Yoshizawa Theorem. 
This yields weaker stability, resulting in the absence of effective analysis techniques. 
Therefore, it only applies to a class of uncertain nonlinear systems in the parametric form.
Moreover, another feature of adaptive techniques requires a known and explicit regressor
determined by the controlled system structure and the exosystem, which can be found in \cite{liu2009parameter} and in the nonlinear regression case \citep*{forte2013robust}.
Secondly, when dealing with the nonlinear time-varying adaptive system in the presence of external inputs, the derivative of the Lyapunov function constructed through the adaptive method incorporates a negative semidefinite term and external inputs. 
Nevertheless, the negative semidefinite term is insufficient to guarantee the boundedness property, even with a small input term (like noise input), as illustrated using a counterexample in \cite{chen2023lasalle}.
In contrast, the nonadaptive method proposed by \cite{isidori2012robust} removes the need for constructing such Lyapunov functions for the closed-loop
system with positive semidefinite definite derivatives. It also did not require the studied nonlinear system to be in parametric form.
As a consequence, the nonadaptive method for solving output regulation problems has received considerable attention \citep*{marconi2008uniform,marconi2007output,bin2020approximate}.
Nevertheless, the generic internal model-based method for solving output regulation strictly relies on the explicit construction of a nonlinear continuous mapping function, which is 
assumed to exist \citep*{kreisselmeier2003nonlinear}.
Just as \cite{bin2024robust} pointed out, no general analytical
expression is known for the nonlinear regulator construction of the generic internal model-based method.
Consequently, approximation methods have been proposed using system identification least-squares techniques to select optimized parameters for the construction of nonlinear continuous mapping functions \citep*{marconi2008uniform,marconi2007output}.
It has been demonstrated that the steady-state tracking error in a nonlinear system is a nonlinear function of the exogenous signals \cite{huang1994robust}.
In recent developments, \cite{wang2023nonparametric} proposed a nonparametric learning framework for the construction of the nonlinear mapping in a general setting by establishing connections between the Generalized Sylvester Matrix Equation, the generic internal model, and a nonlinear Luenberger observer design.  This framework enables the steady-state generator to be polynomial in the exogenous signal for nonlinear output regulation while relaxing restrictive and somewhat ad hoc assumptions on the exosystem, such as the stringent requirement for an even dimension and the absence of a zero eigenvalue imposed in some results.

Based on the aforementioned statements, this article aims to address the nonlinear robust output regulation for general nonlinear output feedback systems with higher relative degrees using nonparametric learning methods. These methods are in contrast with existing techniques that are based on common adaptive control methods \citep*{tomei2023adaptive,liu2009parameter,liuzzo2007adaptive,ding2003global}.
The output regulation problem for uncertain nonlinear
systems with higher relative degree remains an active research area \citep*{tomei2023adaptive,dimanidis2020output}, with some interesting results on the design of
low-complexity, approximation-free, output-feedback controller to achieve output tracking with prescribed transient and steady-state performance \citep*{dimanidis2020output}.
The proposed nonparametric learning framework addresses the global robust output regulation problem via the construction of a linear generic internal model contingent on the existence of a continuous nonlinear mapping. This approach enables the transformation of the nonlinear robust output regulation problem into a robust nonadaptive stabilization design problem for an augmented system with Input-to-State Stable (ISS) dynamics. 
Furthermore, integrating a nonparametric learning framework ensures the viability of the nonlinear mapping, demonstrating its capability to capture intricate and nonlinear relationships without the need for the existence of a predefined model of the steady-state input behavior. 

In contrast to the methodologies in \cite{tomei2023adaptive, nikiforov1998adaptive, ding2003global} and \cite{liu2009parameter}, which rely on system dynamics to satisfy the linear-in-parameter (parameterized) condition and employ adaptive learning, the nonparametric learning framework directly learns the system dynamics of the steady-state input with a data-driven technique that requires the construction of a Hankel matrix using the internal model signal. 
Moreover, this nonparametric approach can be considered a model-free method, as it eliminates the need for a regressor, which, in adaptive methods, is typically dictated by the structure of the system and the exosystem.
The proposed methodology also overcomes the construction of a specific Lyapunov function whose derivatives can only be shown to be negative semidefinite for a restricted class of systems in adaptive methods.
In addition, this study also provides an alternative proof for Lemma 3 of  \cite{wang2023nonparametric} in terms of the solution for a time-varying equation.
In particular, oscillatory disturbances affecting the current of a repulsive magnetic levitation system, the power angle of a virtual synchronous generator, and the heading of a marine surface vessel induce unknown steady-state chattering. This behavior is significantly more complex than the constant offset disturbances addressed in earlier studies. 
Overall, the nonparametric learning framework can contribute to advancing robust control strategies and learning the unknown steady-state behaviour of nonlinear systems with broader implications in many engineering applications.

The rest of this article is organized as follows.  Section \ref{section2} introduces some standard assumptions and lemmas. Section \ref{mainresults} is devoted to the presentation of the main results, which is followed by simulation examples in Section \ref{numerexam} and brief conclusions in Section~\ref{conlu}.

\textbf{Notation:} $\|\cdot\|$ is the Euclidean norm. $\emph{Id} : \mathds{R}\rightarrow \mathds{R}$ is an identity function. For $X_i\in \mathds{R}^{n_i\times m}$ with $i=1,\dots,N$, let $\col(X_1,\dots,X_N)=[X_{1}^{\!\top},\dots ,X^{\!\top}_{N}]^{\!\top}$ and $$\textnormal{diag}(X_1,\dots,X_N)=\textstyle \left[\begin{matrix}
X_1 &    & \\
& \ddots & \\
& & X_N\end{matrix}\right].$$
A function $\alpha: \mathds{R}_{\geq 0}\rightarrow \mathds{R}_{\geq 0}$ is of class $\mathcal{K}$ if it is continuous, positive definite, and strictly increasing. The notation $\mathcal{K}_o$ and $\mathcal{K}_{\infty}$ identifies the subclasses of bounded and unbounded $\mathcal{K}$ functions, respectively. For functions $f_1(\cdot)$ and $f_2(\cdot)$ with compatible dimensions, their composition $f_{1}(f_2(\cdot))$ is denoted by $f_1\circ f_2(\cdot)$. For two continuous and positive definite functions $\kappa_1(\varsigma)$ and $\kappa_2(\varsigma)$, $\kappa_1\in \mathcal{O}(\kappa_2)$ means that  $\limsup_{\varsigma\rightarrow 0^{+}}\frac{\kappa_1(\varsigma)}{\kappa_2(\varsigma)}<\infty$.

\section{Problem Formulation and Preliminaries}\label{section2}
\subsection{System Model and Problem Formulation}

We consider a class of nonlinear control systems modelled by
\begin{subequations}\label{second-nonlinear-systems}
\begin{align}
\dot{z}&=f(z,y,v,w),\label{second-nonlinear-systems-a}\\
\dot{x}&= A_cx+g(z,y,v,w)+B_cbu,\\
 y &=C_cx, \\
 e &= y - h(v,w),
 \end{align}
\end{subequations}
where the state vector comprises $z\in \mathds{R}^{n_z}$ and $x\in \mathds{R}^{r}$. The $z$-subsystem represents the internal dynamics, while the $x$-subsystem represents a chain of $r$ integrators ($r\geq 1$) perturbed by the nonlinear term $g(z,y,v,w)=\col(g_1,\dots, g_r)$. 
In this setup, $y\in \mathds{R}$ is the measured output, $e\in \mathds{R}$ is the tracking error, $u\in \mathds{R}$ is the control input, and $b$ is a positive constant (unknown high-frequency gain). The vector $w\in \mathds{W}\subset \mathds{R}^{n_w}$ represents uncertain parameters, where $\mathds{W}$ is a prescribed compact set containing the origin.
\begin{table}[h]
    \centering
    \renewcommand{\arraystretch}{1.3}
    \caption{System Notation and Description}
    \label{tab:system_description}
    \begin{tabular}{cl} 
        \toprule 
        \textbf{Symbol} & \textbf{Description} \\
        \midrule 
        $(z, x)$ & System states: $z \in \mathds{R}^{n_z}$ (nonlinear), $x \in \mathds{R}^{r}$ (linear) \\
        $r$ & Relative degree ($r \geq 1$) \\
        $y, e, u$ & Output, Tracking error, Control input (all in $\mathds{R}$) \\
        $v$ & Exogenous signal vector in $\mathds{R}^{n_v}$ \\
        $w$ & Uncertain parameter vector in $\mathds{W} \subset \mathds{R}^{n_w}$ \\
        $g(\cdot)$ & Nonlinear coupling term $\col(g_1, \dots, g_r)$ \\
        $A_c, B_c, C_c$ & Canonical form matrices \\
        $b$ & High-frequency gain ($b > 0$) \\
        \bottomrule 
    \end{tabular}
\end{table}
The functions $h(\cdot)$, $f(\cdot)$, and $g_i(\cdot)$ are sufficiently smooth and vanish at the origin, satisfying $h(0,w)=0$, $f(0,0,0,w)=0$, and $g_i(0,0,0,w)=0$ for all $w\in \mathds{W}$. The exogenous signal $v(t)\in \mathds{R}^{n_v}$ represents the reference input and/or disturbance, generated by the exosystem:
\begin{align}\label{eqn: exosystem system}
\dot{v}=S(\sigma)v,
\end{align}
where $\sigma \in \mathds{S}\subset \mathds{R}^{n_\sigma}$ represents the exosystem uncertainty, and $S(\sigma)$ is a matrix depending on $\sigma$.
The matrices $A_c\in \mathds{R}^{r\times r}$, $B_c\in \mathds{R}^r$, and $C_c\in \mathds{R}^{1\times r}$ form a canonical representation:
\begin{align*}
    A_c=\left[\begin{matrix}\textbf{0}& I_{r-1}\\
    0& \textbf{0}
    \end{matrix}\right]\!, \quad  C_c=\col(1, \textbf{0}_{r-1})^\top, \quad B_c=\col(\textbf{0}_{r-1},1),
\end{align*}
where $I_{r-1}$ is the identity matrix and $\textbf{0}_{r-1}$ is the zero vector of dimension $r-1$. The system notation and description is summarized in Table \ref{tab:system_description}.

The nonlinear robust output regulation problem is formulated as follows:
\begin{prob} \label{Prob: second-order-Output-regulation}
Consider the nonlinear system \eqref{second-nonlinear-systems}--\eqref{eqn: exosystem system}. Given compact subsets $\mathds{S}\subset \mathds{R}^{n_{\sigma}}$, $\mathds{W}\subset \mathds{R}^{n_w}$, and $\mathds{V}\subset \mathds{R}^{n_v}$ (with $\mathds{W}$ and $\mathds{V}$ containing the origin), design a control law such that, for all initial conditions $v(0)\in \mathds{V}$, $\sigma \in \mathds{S}$, $w\in \mathds{W}$, and any initial state $\textnormal{\col}(z(0), x(0))\in\mathds{R}^{n_z+r}$, the solution of the closed-loop system exists and remains bounded for all $t\geq 0$, and the tracking error satisfies $\lim\limits_{t\rightarrow\infty}e(t)=0$.
\end{prob}

\subsection{Assumptions and Coordinate Transformation}
We first state the standard assumptions regarding the exosystem and the regulator equations.

\begin{ass}\label{ass0}
For all $\sigma \in \mathds{S}$, the eigenvalues of the exosystem matrix $S(\sigma)$ are distinct and lie on the imaginary axis.
\end{ass}

\begin{ass}\label{ass1}
There exist globally defined smooth functions $\bm{x}(v,w,\sigma)$, $\bm{z}(v,w,\sigma)$, and $\bm{u}(v,w,\sigma)$ solving the regulator equations:
\begin{subequations}\label{regulator-1}
\begin{align}
\frac{\partial \bm{z}(\mu)}{\partial v} S(\sigma)v &= f(\bm{z}(\mu), h(v,w), v, w), \\
\frac{\partial\bm{x}(\mu)}{\partial v}S(\sigma)v &=  A_c\bm{x}(\mu)+ g(\bm{z}(\mu),\bm{x}_1(\mu),\mu)+B_cb\bm{u}(\mu),
\end{align}
\end{subequations}
subject to $\bm{z}(0,w,\sigma)=0$ and $\bm{x}_1(\mu)=h(v,w)$, where $\mu=\col(v,w,\sigma)$.
\end{ass}

\begin{ass}[Minimum-phase condition]\label{H2}
The inverse dynamics, defined relative to the steady-state manifold $\bm{z}(\mu)$, given by
\begin{align}\label{Reg-2}
\dot{\bar{z}} &= f(\bar{z}+\bm{z}(\mu),e+h(v,w),v, w) - f(\bm{z}(\mu), h(v,w),v, w),
\end{align}
are input-to-state stable (ISS) with state $\bar{z}=z-\bm{z}(\mu)$ and input $e$.
\end{ass}

\begin{rem}\label{rem-barV}
As a consequence of Assumption \ref{H2} and the changing supply function technique \citep{sontag1995changing}, there exists a smooth ISS-Lyapunov function $\bar{V}_{\bar{z}}(\bar{z})$ satisfying
$$ \underline{\alpha}_{\bar{z}}(\|\bar{z}\|)\leq \bar{V}_{\bar{z}}(\bar{z})\leq \bar{\alpha}_{\bar{z}}(\|\bar{z}\|), $$
such that its time derivative along \eqref{Reg-2} satisfies
$$\dot{\bar{V}}_{\bar{z}}(\bar{z})\leq -\Delta_{\bar{z}}(\bar{z})\|\bar{z}\|^2+\delta_{\bar{z}} e^2,$$
where $\Delta_{\bar{z}}(\cdot)$ is a positive smooth function and $\delta_{\bar{z}} > 0$ is a constant.
\end{rem}

To facilitate the control design for the system with relative degree $r \geq 2$, we employ the input-driven filter as in \cite{jiang2004unifying},
\begin{align}\label{input-filter}
    \dot{\hat{x}}=A\hat{x}+B_cu,
\end{align}
where $A=A_c-\lambda C_c$ is Hurwitz with $\lambda=\col(\lambda_1,\dots, \lambda_r)$. Applying the coordinate transformation $\tilde{x}_i=b^{-1}x_i-\hat{x}_i$ ($i=1,\dots,r$) yields the transformed system:
\begin{subequations}\label{nonlinear-systems-trans}
\begin{align}
\dot{z}&=f(z,y,v,w),\\
\dot{\tilde{x}}&= A\tilde{x}+b^{-1}(\lambda y+g(z,y,v,w)),\\
 \dot{y} &=b\hat{x}_2+b\tilde{x}_2+g_1(z,y,v,w), \\
  \dot{\hat{x}}_i&= \hat{x}_{i+1}-\lambda_i\hat{x}_1,\quad i=2,\dots,r-1,\\
  \dot{\hat{x}}_r&=u-\lambda_r\hat{x}_1.
 \end{align}
\end{subequations}

Ideally, the aim is for the filter state $\hat{x}(t)$ to track a steady-state profile induced by the ideal control $\bm{u}(v,w,\sigma)$.
\begin{ass}\label{ass0i}
The ideal control law $\bm{u}(v,w,\sigma)$ defined in Assumption \ref{ass1} is a polynomial in $v$ with coefficients dependent on $w$ and $\sigma$.
\end{ass}

The following lemma establishes the structure of the steady-state generator for the filter states, which is crucial for the internal model design.

\begin{lem}[Steady-State Filter Dynamics]\label{lem:steady_state_poly}
Under Assumptions \ref{ass1} and \ref{ass0i}, the steady-state control $\bm{u}(\mu)$ can be expressed as $\bm{u}(\mu) = \Gamma_u(\mu) \tau_u(v)$, where $\tau_u(v)$ is a vector containing monomials of $v$ satisfying the linear dynamics $\dot{\tau}_u = \Phi_u(\sigma) \tau_u$.
Consequently, there exists a unique steady-state trajectory $\bm{\hat{x}}(\mu)$ for the filter \eqref{input-filter}, defined by $\bm{\hat{x}}(\mu) = P_u(\mu)\tau_u(v)$, where $P_u$ is the unique solution to the Sylvester equation:
\begin{equation}
P_u \Phi_u(\sigma) = A P_u + B_c \Gamma_u(\mu).
\end{equation}
Moreover, $\bm{\hat{x}}(\mu)$ satisfies the autonomous differential equation:
\begin{align}\label{regulator-2}
    \dot{\bm{\hat{x}}}(\mu) = A \bm{\hat{x}}(\mu)+ B_c \bm{u}(\mu).
\end{align}
\end{lem}

\begin{proof}
Since $S(\sigma)$ has eigenvalues on the imaginary axis (Assumption \ref{ass0}), the generator matrix $\Phi_u(\sigma)$ for the polynomial basis $\tau_u(v)$ also shares this spectral property. Given that $A$ is Hurwitz, the spectra of $A$ and $\Phi_u(\sigma)$ are disjoint. Thus, the Sylvester equation $P_u \Phi_u - A P_u = B_c \Gamma_u$ has a unique solution $P_u$. Differentiating $\bm{\hat{x}} = P_u \tau_u$ yields \begin{align*}\dot{\bm{\hat{x}}} &= P_u \Phi_u \tau_u\\
&= (A P_u + B_c \Gamma_u)\tau_u \\
&= A \bm{\hat{x}} + B_c \bm{u}.\end{align*}
This completes the proof. 
\end{proof}
Defining the error $\bm{E}(\mu)=b^{-1}\bm{x}(\mu)-\bm{\hat{x}}(\mu)$, the solution to the regulator equations for the composite system is $$\{\bm{z}(\mu), \bm{E}(\mu), \bm{y}(\mu), \bm{\hat{x}}(\mu), \bm{u}(\mu)\}.$$

\begin{rem}\label{remPE}
Let $\bm{\hat{x}}_2(\mu)$ denote the second component of $\bm{\hat{x}}(\mu)$. Under Assumptions \ref{ass0}--\ref{ass0i}, $\bm{\hat{x}}_2$ admits a finite trigonometric representation \citep{liu2009parameter}:
\begin{align}\label{remPE-trsin}
\bm{\hat{x}}_2(v(t),\sigma,w)
    = \sum\nolimits_{j=1}^{n} C_{j}(v(0),w,\sigma)\, e^{\imath \hat{\omega}_{j} t},
\end{align}
where $\hat{\omega}_j$ are distinct real frequencies.
\end{rem}

\begin{ass}\label{ass5-explicit}
For any initial condition $v(0)\in \mathds{V}$ and parameters $w\in \mathds{W}$, $\sigma\in \mathds{S}$, the coefficients satisfy $C_{j}(v(0), w,\sigma)\neq 0$ for all $1\leq j \leq n$.
\end{ass}

\subsection{Generic Internal Model Design}
Under Assumptions \ref{ass0} and \ref{ass0i}, there exists a positive integer $n$ such that the steady-state signal $\bm{\hat{x}}_2(\mu)$ satisfies the autonomous differential equation:
\begin{align} \label{aode-explicit}
\frac{d^{n}\bm{\hat{x}}_2}{dt^{n}}+a_{n}(\sigma)\frac{d^{n-1}\bm{\hat{x}}_2}{dt^{n-1}}+\dots+a_{1}(\sigma)\bm{\hat{x}}_2=0,
\end{align}
where $a_{i}(\sigma) \in \mathds{R}$. By defining the state vector $\bm{\xi} = \col(\bm{\hat{x}}_2, \dot{\bm{\hat{x}}}_2, \dots, \bm{\hat{x}}_2^{(n-1)})$, \eqref{aode-explicit} can be represented in state-space form, referred to as the \textit{steady-state generator},
\begin{subequations}\label{stagerator}
\begin{align}
\dot{\bm{\xi}} &= \Phi(a(\sigma)) \bm{\xi}, \\
\bm{\hat{x}}_2 &= \Gamma \bm{\xi},
\end{align}
\end{subequations}
where the companion matrix $\Phi(a(\sigma))$ and the output vector $\Gamma$ are defined as
\begin{align*}
  \Phi(a(\sigma)) &= \left[
                      \begin{array}{c|c}
                        \textbf{0}_{(n-1)\times 1} & I_{n-1} \\
                        \hline
                        -a_{1}(\sigma) & -a_{2}(\sigma),\dots,-a_{n}(\sigma) \\
                      \end{array}
                    \right],\\
  \Gamma &= \left[ 1, 0, \dots, 0 \right]_{1\times n}.
\end{align*}

To design an internal model that reproduces $\bm{\hat{x}}_2$, introduce a dynamic system with state $\eta \in \mathds{R}^{2n}$,
\begin{align}\label{explicit-mas1}
\dot{\eta} = M\eta + N\hat{x}_2,
\end{align}
where the pair $(M, N)$ is chosen in the controllable canonical form:
\begin{subequations}\label{MNINter}
\begin{align}
M&= \left[
      \begin{array}{c|c}
        \textbf{0}_{(2n-1)\times 1} & I_{2n-1} \\
        \hline
        -m_{1} & -m_{2},\dots,-m_{2n} \\
      \end{array}
    \right], \\
N&= \left[ 0, \dots, 0, 1 \right]_{1\times 2n}^{\top}.
\end{align}
\end{subequations}
The coefficients $m_j$ are selected such that the matrix $M$ is Hurwitz. Let $P_M(\lambda) = \lambda^{2n} + \sum_{j=1}^{2n} m_j \lambda^{j-1}$ denote the characteristic polynomial of $M$. Since the eigenvalues of $\Phi(a)$ are distinct with zero real parts (by Assumption \ref{ass0i}) and $M$ is Hurwitz, their spectra are disjoint. This condition ensures nonsingularity of the matrix polynomial function:
$$ \Xi(a) = \underbrace{\Phi(a)^{2n}+\sum\nolimits_{j=1}^{2n}m_{j}\Phi(a)^{j-1}}_{P_M(\Phi(a))} \in \mathds{R}^{n \times n }. $$

\subsubsection*{The Generalized Sylvester Equation}
To immerse the generator dynamics \eqref{stagerator} into the internal model \eqref{explicit-mas1}, we seek a transformation matrix $Q \in \mathds{R}^{2n \times n}$ satisfying the \textit{Generalized Sylvester Matrix Equation} (see \cite{wang2023nonparametric}):
\begin{align}
M Q - Q \Phi(a(\sigma)) = -N\Gamma. \label{MNGAMMAPhi}
\end{align}
Due to the specific canonical structure of $(M, N)$, the unique solution $Q$ admits an explicit form. Let $Q$ be partitioned as $Q = \col(Q_1, \dots, Q_{2n})$ where $Q_j \in \mathds{R}^{1 \times n}$.
Using the commutativity property $\Phi(a)\Xi(a)^{-1}=\Xi(a)^{-1}\Phi(a)$ and the structural identity $\col(\Gamma, \Gamma \Phi(a),\dots, \Gamma\Phi(a)^{n-1} )=I_n$, the upper block of $Q$ can be explicitly derived as
\begin{align}
\col & (Q_1(a),\dots,Q_n(a))\nonumber\\
=&\ \col(\Gamma \Xi(a)^{-1},\Gamma \Phi(a)\Xi(a)^{-1},\dots, \Gamma \Phi(a)^{n-1}\Xi(a)^{-1})\nonumber\\
=&\ \underbrace{\col(\Gamma,\Gamma \Phi(a),\dots, \Gamma \Phi(a)^{n-1})}_{I_n} \Xi(a)^{-1}, \label{XIQA-explicit}
\end{align}
with the general term given by $Q_{j}(a)=\Gamma \Xi(a)^{-1}\Phi(a)^{j-1}$.

Finally, defining the steady-state internal model coordinate $\bm{\eta}^{\star} = Q \bm{\xi}$, we construct the Hankel matrix $\Theta(\bm{\eta}^{\star})$ as in \cite{afri2016state}:
\begin{align}
\Theta (\bm{\eta}^{\star}) \equiv \left[\begin{matrix}\eta^{\star}_{1} &\dots&\eta^{\star}_{n}\\
\vdots&\ddots&\vdots\\
\eta^{\star}_{n} &\dots&\eta^{\star}_{2n -1}
\end{matrix}\right] \in \mathds{R}^{n \times n}.
\end{align}
As shown in Lemma 3 of \cite{wang2023nonparametric}, substituting the solution of \eqref{MNGAMMAPhi} into the integral form of the internal model ensures that $\bm{\eta}^{\star}$ satisfies the target dynamics $\dot{\bm{\eta}}^{\star} = M\bm{\eta}^{\star} + N\bm{\hat{x}}_2$, completing the design.

\subsection{Error Dynamics}

Perform coordinate and input transformations on the composite systems \eqref{eqn: exosystem system}, \eqref{nonlinear-systems-trans}, and \eqref{explicit-mas1} to give
\begin{align*}
\bar{z}&= z-\bm{z},&  \bar{x}= \tilde{x}-\bm{E}, \\
\bar{\eta}&= \eta-\bm{\eta}^{\star}-Nb^{-1}e,& e=y-\bm{x}_1, 
\end{align*}
which yields an error system in the form:
\begin{subequations}\label{Main-sys1}\begin{align}
\dot{\bar{z}} 
&= \bar{f}(\bar{z},e,\mu),\\
\dot{\bar{x}}
&=A\bar{x}+ b^{-1}\!\big[\bar{g}(\bar{z}, e, \mu)+\lambda e\big],\\
\dot{\bar{\eta}}
&=M\bar{\eta} -N\!\left(\bar{x}_2 -b^{-1}e+b^{-1}\bar{g}_1(\bar{z}, e, \mu)\right)\!,\\
\dot{e}
&=b(\hat{x}_2-\bm{\hat{x}}_2)+b\bar{x}_2 +\bar{g}_1(\bar{z}, e, \mu) ,\\
 \dot{\hat{x}}_i
 &= \hat{x}_{i+1}-\lambda_i\hat{x}_1,\quad i=2,\dots,r-1,\\
  \dot{\hat{x}}_r
  &= u-\lambda_r\hat{x}_1,
\end{align}\end{subequations}
where $\mu=\col(\sigma,v,w)$,
\begin{align*}
\bar{f}(\bar{z},e,\mu)
&=f(\bar{z}+\bm{z},e+\bm{x}_1,\mu)-f(\bm{z},\bm{x}_1,\mu),\\
\bar{g}(\bar{z}, e, \mu)
&= g(\bar{z}+\bm{z},e+\bm{x}_1,\mu)-g(\bm{z},\bm{x}_1,\mu).
\end{align*}
It can be verified that, for all $\mu\in \mathds{V}\times\mathds{W} \times\mathds{S}$, $\bar{f}(0,0,\mu)=0$ and $\bar{g}(0, 0, \mu)=0$, Problem \ref{Prob: second-order-Output-regulation} can be solved if a control law can be found to stabilize the system \eqref{Main-sys1}.

Let $\bar{x}_c=\col(\bar{\eta}, \bar{x})$ and $\bar{G}_c(\bar{z}, e, \mu)=b^{-1}\col\big(Ne-N\bar{g}_1(\bar{z}, e, \mu), \bar{g}(\bar{z}, e, \mu)+\lambda e\big)$; the system \eqref{Main-sys1} can be rewritten into the form
\begin{subequations}\label{Main-sys-com}\begin{align}
\dot{\bar{z}} 
&=\bar{f}(\bar{z},e,\mu),\\
\dot{\bar{x}}_c
&= \underbrace{\left[\begin{matrix}M & -NC_cA_c\\
\textbf{0}&A
\end{matrix}\right]}_{M_c}\bar{x}_c+\bar{G}_c(\bar{z}, e, \mu) ,\label{Main-sys-com-xc}\\
\dot{e}
&=b(\hat{x}_2-\chi(\bm{\eta}^*))+b\bar{x}_2 +\bar{g}_1(\bar{z}, e, \mu) ,\\
 \dot{\hat{x}}_i
 &= \hat{x}_{i+1}-\lambda_i\hat{x}_1, \quad i=2,\dots,r-1,\\
\dot{\hat{x}}_r
&=u-\lambda_r \hat{x}_1.
\end{align}\end{subequations}
It can be verified that, for all $\mu\in \mathds{V}\times\mathds{W} \times\mathds{S}$,  $\bar{G}_c(0, 0, \mu)=\bm{0}$ and the matrix $M_c$ is Hurwitz. Hence, the $(\bar{z}, \bar{x}_c)$-subsystem in system \eqref{Main-sys-com} is in a similar form as the system (8) of \cite{wang2023nonparametric}. 
As a result,  the $(\bar{z}, \bar{x}_c)$-subsystem in system \eqref{Main-sys-com}, under Assumptions \ref{ass0}, \ref{ass1}, and \ref{H2}, admits the following properties (see Properties 1 and 2 in \cite{wang2023nonparametric}): 
\begin{prop}\label{property1}%
There exists a smooth input-to-state Lyapunov function $V_0\equiv V_0(\bar{z},\bar{x}_c)$ satisfying
\begin{align}
\underline{\alpha}_0(\|\bar{Z}\|)&\leq V_0(\bar{Z})\leq \bar{\alpha}_0(\|\bar{Z}\|),\notag\\
\dot{V}_0 &\leq  -\|\bar{Z}\|^2+\bar{\gamma}^*\bar{\gamma} \left(e\right),\label{V0}
\end{align}
for some positive constant $\bar{\gamma}^*$ and comparison functions $\underline{\alpha}_0(\cdot)\in \mathcal{K}_{\infty}$, $\bar{\alpha}_0(\cdot)\in \mathcal{K}_{\infty}$, and $\bar{\gamma}(\cdot)\in \mathcal{K}_\infty$ with $\bar{Z}=\textnormal{\col}(\bar{z},\bar{x}_c)$.
\end{prop}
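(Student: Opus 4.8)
The plan is to establish Property \ref{property1} by building $V_0$ as a sum of a Lyapunov function for the $\bar{z}$-subsystem supplied by Remark \ref{rem-barV} and a quadratic Lyapunov function for the linear $\bar{x}_c$-subsystem \eqref{Main-sys-com-xc}, and then to exploit the freedom in the changing-supply-function construction to absorb the interconnection terms. The structure of \eqref{Main-sys-com} is a cascade in which $\bar{z}$ is driven by $e$ alone (Assumption \ref{H2}), while $\bar{x}_c$ is a Hurwitz linear system forced by $\bar{G}_c(\bar{z},e,\mu)$; hence a composite ISS-Lyapunov argument is the natural route.

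First I would treat the $\bar{x}_c$-subsystem. Since $M_c$ is Hurwitz, let $P_c=P_c^{\!\top}>0$ solve the Lyapunov equation $P_cM_c+M_c^{\!\top}P_c=-2I$ and set $V_{\bar{x}_c}=\bar{x}_c^{\!\top}P_c\bar{x}_c$, so that $\underline{\lambda}\|\bar{x}_c\|^2\le V_{\bar{x}_c}\le \overline{\lambda}\|\bar{x}_c\|^2$ with $\underline{\lambda},\overline{\lambda}$ the extreme eigenvalues of $P_c$. Differentiating along \eqref{Main-sys-com-xc} gives $\dot{V}_{\bar{x}_c}=-2\|\bar{x}_c\|^2+2\bar{x}_c^{\!\top}P_c\bar{G}_c(\bar{z},e,\mu)$. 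Because $\bar{G}_c$ is smooth with $\bar{G}_c(0,0,\mu)=\bm{0}$ uniformly on the compact set $\mathds{V}\times\mathds{W}\times\mathds{S}$, the difference $\bar{g}(\bar{z},e,\mu)=g(\bar{z}+\bm{z},e+\bm{x}_1,\mu)-g(\bm{z},\bm{x}_1,\mu)$ admits a decomposition $\|\bar{G}_c(\bar{z},e,\mu)\|\le \phi_z(\|\bar{z}\|)\|\bar{z}\|+\phi_e(|e|)|e|$ for some smooth positive functions $\phi_z,\phi_e$ (a standard consequence of smoothness and vanishing at the origin). Young's inequality then yields $\dot{V}_{\bar{x}_c}\le -\|\bar{x}_c\|^2+\varphi_z(\|\bar{z}\|)\|\bar{z}\|^2+\varphi_e(|e|)e^2$ for suitable smooth positive $\varphi_z,\varphi_e$.

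Next I would combine this with the $\bar{z}$-estimate. By Remark \ref{rem-barV}, for \emph{any} smooth $\Delta_{\bar{z}}(\bar{z})>0$ there is $\bar{V}_{\bar{z}}$ with $\underline{\alpha}_{\bar{z}}(\|\bar{z}\|)\le\bar{V}_{\bar{z}}\le\bar{\alpha}_{\bar{z}}(\|\bar{z}\|)$ and $\dot{\bar{V}}_{\bar{z}}\le-\Delta_{\bar{z}}(\bar{z})\|\bar{z}\|^2+\delta_{\bar{z}}\gamma_{\bar{z}}(e)e^2$. Setting $V_0=\bar{V}_{\bar{z}}+V_{\bar{x}_c}$ and choosing $\Delta_{\bar{z}}(\bar{z})=1+\varphi_z(\|\bar{z}\|)$, which is a legitimate smooth positive function, makes the $\varphi_z(\|\bar{z}\|)\|\bar{z}\|^2$ term cancel and leaves $\dot{V}_0\le-\|\bar{z}\|^2-\|\bar{x}_c\|^2+\big[\delta_{\bar{z}}\gamma_{\bar{z}}(e)+\varphi_e(|e|)\big]e^2=-\|\bar{Z}\|^2+\big[\delta_{\bar{z}}\gamma_{\bar{z}}(e)+\varphi_e(|e|)\big]e^2$. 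The continuous positive-definite remainder in $e$, which vanishes at $e=0$ on account of the $e^2$ factor, can be upper bounded by $\bar{\gamma}^*\bar{\gamma}(e)$ for some positive constant $\bar{\gamma}^*$ and some $\bar{\gamma}\in\mathcal{K}_\infty$, giving \eqref{V0}; the sandwich bounds $\underline{\alpha}_0,\bar{\alpha}_0\in\mathcal{K}_\infty$ follow by combining the $\mathcal{K}_\infty$ bounds on $\bar{V}_{\bar{z}}$ with the quadratic bounds on $V_{\bar{x}_c}$ and writing everything in terms of $\|\bar{Z}\|=\|\col(\bar{z},\bar{x}_c)\|$.

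The main obstacle is the uniform, global control of the interconnection term $\bar{G}_c(\bar{z},e,\mu)$: because the system is only assumed smooth and the result is required to be global in $(\bar{z},\bar{x}_c)$, the gain $\phi_z(\|\bar{z}\|)$ may grow without bound, so a fixed-gain ISS estimate for the $\bar{z}$-subsystem would fail to dominate it. The decisive device is therefore the changing-supply-function technique of Remark \ref{rem-barV}, which furnishes an \emph{arbitrarily large} smooth gain $\Delta_{\bar{z}}(\bar{z})$; this is precisely what allows the $-\Delta_{\bar{z}}(\bar{z})\|\bar{z}\|^2$ term to swallow $\varphi_z(\|\bar{z}\|)\|\bar{z}\|^2$ and renders the composite estimate global rather than merely local.
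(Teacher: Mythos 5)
Your proposal is correct and follows essentially the same route as the paper's own argument (given in the remark after Property 2): a quadratic Lyapunov function $\bar{x}_c^{\!\top}P_c\bar{x}_c$ for the Hurwitz $M_c$-block, a smoothness-and-vanishing bound on $\bar{G}_c$ (the paper invokes Lemma 7.8 of \cite{huang2004nonlinear} for what you derive directly), Young's inequality on the cross term, and the changing-supply-function freedom of Remark \ref{rem-barV} to choose $\Delta_{\bar{z}}$ large enough to absorb the $\bar{z}$-dependent interconnection gain. No gaps.
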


\begin{prop}\label{property2}%
There are positive smooth functions $\gamma_{g0}(\cdot)$ and $\gamma_{g1}(\cdot)$ such that
$$b^2\bar{x}_2^2+\|\bar{g}_1(\bar{z}, e, \mu)\|^2\leq \gamma_{g0}(\bar{Z})\|\bar{Z}\|^2+e^2\gamma_{g1}(e).$$
\end{prop}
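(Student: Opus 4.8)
The plan is to bound the two summands separately. The term $b^2\bar{x}_2^2$ is essentially free: $\bar{x}_2$ is one entry of $\bar{x}$, which is a sub-block of $\bar{x}_c=\col(\bar{x},\bar{\eta})$ and hence of $\bar{Z}=\col(\bar{z},\bar{x}_c)$, so $\bar{x}_2^2\le\|\bar{Z}\|^2$ and $b^2\bar{x}_2^2\le b^2\|\bar{Z}\|^2$; this is absorbed into the first term of the claimed inequality by insisting $\gamma_{g0}(\cdot)\ge b^2$. All the work therefore concerns $\|\bar{g}_1(\bar{z},e,\mu)\|^2$.

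Recall $\bar{g}_1(\bar{z},e,\mu)=g_1(\bar{z}+\bm{z},e+\bm{x}_1,\mu)-g_1(\bm{z},\bm{x}_1,\mu)$ is smooth in $(\bar{z},e)$, satisfies $\bar{g}_1(0,0,\mu)=0$ for every $\mu$ (already verified above), and carries a parameter $\mu=\col(\sigma,v,w)$ ranging over a compact set, since $\sigma,w$ lie in the compact sets $\mathds{S},\mathds{W}$ and $v(t)$ stays in a compact set by Assumption \ref{ass0}. I would extract the two directions by a Hadamard-type expansion,
\[
\bar{g}_1(\bar{z},e,\mu)=\underbrace{\Big(\int_0^1\tfrac{\partial \bar{g}_1}{\partial \bar{z}}(s\bar{z},e,\mu)\,ds\Big)}_{G_z(\bar{z},e,\mu)}\bar{z}+\underbrace{\Big(\int_0^1\tfrac{\partial \bar{g}_1}{\partial e}(0,se,\mu)\,ds\Big)}_{G_e(e,\mu)}e,
\]
the point being that the pure-$e$ coefficient $G_e$ is evaluated at $\bar{z}=0$ and hence depends on $(e,\mu)$ only. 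Squaring with $(a+b)^2\le 2a^2+2b^2$ gives $\|\bar{g}_1\|^2\le 2\|G_z\|^2\|\bar{z}\|^2+2G_e^2e^2$. Maximizing the smooth coefficients over the compact $\mu$-set produces continuous functions $P(\bar{z},e)=2\max_\mu\|G_z\|^2$ and $\beta(e)=2\max_\mu G_e^2$; the second immediately supplies the $e^2\gamma_{g1}(e)$ contribution after majorizing $\beta$ by a smooth function of $e$.

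It remains to force the coefficient $P(\bar{z},e)$ of $\|\bar{z}\|^2$ to depend on $\bar{z}$ alone, which is the main obstacle: a priori $P$ still couples $\bar{z}$ and $e$, whereas $\gamma_{g0}$ may depend only on $\bar{Z}$ (which excludes $e$). I would resolve this by dominating the continuous $P$ by a separated sum $P(\bar{z},e)\le\Psi_1(\bar{z})+\Psi_2(e)$ with $\Psi_2$ chosen as a smooth nondecreasing function of $e^2$, writing $\Psi_2(e)=\Psi_2(0)+e^2\Lambda(e)$, and then splitting the residual product $e^2\Lambda(e)\|\bar{z}\|^2$ by Young's inequality into a $\|\bar{z}\|^2\cdot\|\bar{z}\|^2$ piece and an $e^2\cdot e^2$ piece. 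Collecting terms, using $\|\bar{z}\|\le\|\bar{Z}\|$ and the monotonicity of the majorants, yields smooth positive $\gamma_{g0}(\bar{Z})$ and $\gamma_{g1}(e)$ of the required form. The only delicate bookkeeping is ensuring that every comparison function is genuinely smooth (not merely continuous) and that all bounds hold uniformly in $\mu$; these are exactly the properties guaranteed by the standard comparison and decomposition lemmas for smooth functions with parameters in a compact set in \cite{huang2004nonlinear}.
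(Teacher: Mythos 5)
Your argument is correct and is in substance the same as the paper's: the paper simply observes that $b^2\bar{x}_2^2+\|\bar{g}_1(\bar{z},e,\mu)\|^2$ is smooth in $(\bar{x}_2,\bar{z},e)$, vanishes at the origin uniformly over the compact parameter set, and invokes Lemma 7.8 of \cite{huang2004nonlinear}, whose proof is precisely the Hadamard expansion, compactness maximization, variable separation, and Young-inequality splitting that you carry out explicitly. Your version is just a self-contained derivation of that cited lemma specialized to this instance.
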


\begin{rem}\label{rem:proof_prop1}
Proposition \ref{property1} can be established via Lyapunov analysis. Since $M_c$ is Hurwitz, there exists a positive definite matrix $P_c$ such that $P_c M_c + M_c^\top P_c = -2I$. Since $\bar{G}_c$ is smooth and vanishes at the origin, Lemma 7.8 in \cite{huang2004nonlinear} implies that
$$ \|P_c\bar{G}_c(\bar{z}, e ,v)\|^2 \leq \pi_1(\bar{z})\|\bar{z}\|^2 + \phi_1(e)e^2, $$
for some known smooth functions $\pi_1(\cdot) \geq 1$ and $\phi_1(\cdot) \geq 1$.
Additionally, by Remark \ref{rem-barV}, the $\bar{z}$-subsystem admits an ISS Lyapunov function $\bar{V}_{\bar{z}}$ satisfying $\dot{\bar{V}}_{\bar{z}} \leq -\Delta_{\bar{z}}(\bar{z})\|\bar{z}\|^2 + \delta_{\bar{z}}\gamma_{\bar{z}}(e)e^2$.
We construct the composite Lyapunov function $V_0(\bar{Z}) = \bar{V}_{\bar{z}}(\bar{z}) + \bar{x}_c^{\top}P_c\bar{x}_c$. Its time derivative along the trajectories of \eqref{Main-sys-com} satisfies
\begin{align*}
    \dot{V}_0 &\leq -\Delta_{\bar{z}}\|\bar{z}\|^2 + \delta_{\bar{z}}\gamma_{\bar{z}}e^2 - \|\bar{x}_c\|^2 + \|P_c\bar{G}_c\|^2 \\
    &\leq -\big(\Delta_{\bar{z}}(\bar{z}) - \pi_1(\bar{z})\big)\|\bar{z}\|^2 - \|\bar{x}_c\|^2 + \big(\delta_{\bar{z}}\gamma_{\bar{z}}(e) + \phi_1(e)\big)e^2.
\end{align*}
Choosing the design freedom in Remark \ref{rem-barV} such that $\Delta_{\bar{z}}(\bar{z}) > \pi_1(\bar{z}) + 1$ implies that $\dot{V}_0 \leq -\|\bar{z}\|^2 - \|\bar{x}_c\|^2 + \bar{\gamma}^*\bar{\gamma}(e)$, which confirms \eqref{V0}.
\end{rem}

\section{Main Results}\label{mainresults}
\subsection{Nonparametric Learning in Robust Output Regulation}
\begin{figure}[htp]
    \centering
\tikzstyle{block} = [draw, fill=mitSilverGray!10, rectangle,
    minimum height=1.5em, minimum width=6em]
    \tikzstyle{controllaw} = [draw, fill=mitBlue!60, rectangle,
    minimum height=1.5em, minimum width=1em]
      \tikzstyle{controllaw2} = [draw, fill=mitBlue!60, rectangle,
    minimum height=1.5em, minimum width=2em]
    \tikzstyle{iterative} = [draw, fill=mitRed!60, rectangle,
    minimum height=1.5em, minimum width=2em]
    \tikzstyle{iterative2} = [draw=black, fill=mitGreen!60, rectangle,
    minimum height=1.5em, minimum width=2em]
        \tikzstyle{iterative3} = [draw=black, fill=mitGreen!60, rectangle,
    minimum height=0.5em, minimum width=2em]
            \tikzstyle{iterative4} = [draw=black, fill=mitPink!60, rectangle,
    minimum height=0.5em, minimum width=2em]
    \tikzstyle{interModel} = [draw, fill=mitPurple!60, rectangle,
    minimum height=1.5em, minimum width=6em]
       \tikzstyle{iterative5} = [draw, fill=mitDPink!60, rectangle,
    minimum height=1.5em, minimum width=6em]
       \tikzstyle{iterativeframe} = [draw=black!40, dashed, fill=mitDPink!30, rectangle, rounded corners,
    minimum height=4.75em, minimum width=20em]
    \tikzstyle{iterative5frame} = [draw, fill=mitPink!60, rectangle,
    minimum height=1.5em, minimum width=2em]
    
\tikzstyle{sum} = [draw, fill=green!20, circle, node distance=1cm,minimum size=0.05cm]
\tikzstyle{input} = [coordinate]
\tikzstyle{output} = [coordinate]
\tikzstyle{pinstyle} = [pin edge={to-,thin,black}]
\begin{tikzpicture}[scale=0.75,every node/.style={transform shape}, node distance=2cm,>=latex']
    \node [input, name=input] {};
    \node [sum, fill=mitLightBlue, right =1.625cm of input] (sum) {};
    \node [controllaw, right = 1cm of sum] (controller) {~$u = \alpha_{s,r}(\epsilon_1,\epsilon_2,\dots,\epsilon_r,k^*,\eta,\hat{a})$~};
     \node [controllaw2, below left = 4.9cm and 0.195cm of sum] (controller-ttle) {Controller};
    
     \node [interModel, right = 0.235cm of controller-ttle] (interModel-ttle) {Internal Model};
     \node [iterative, right  = 0.235cm of interModel-ttle] (iterative-ttle) {Input-driven Filter};
     \node [iterative2, right = 0.235cm of iterative-ttle] (recursive-ttle) {Recursive Equations};
    \node [block, right = 1cm of controller,
            node distance=3cm, pin={[pinstyle]below:Disturbances}, fill=mitSilverGray!60] (system) {System};
    \draw [->] (controller) -- node[name=u,above] {$u$} (system);
    \node [output,above =0.5cm of controller] (measurements) {};
    \node [output, right =0.1cm of measurements] (measurements2) {};
   
     \node [iterative, below left =0.85cm and -4.95cm of controller] (Ivlearn) { $\dot{\hat{x}}=A\hat{x}+ B_cu$};
     \node [output,left =1.6cm of Ivlearn] (Ivlearn1) {};
      \node [interModel, below =0.5cm of Ivlearn] (IMmodel) {$\dot{\eta}=M\eta+N\hat{x}_2$};
    \node [output,left =0.35cm of IMmodel] (IMmodel2) {};
    \node [output,left =1.175cm of IMmodel2] (IMmodel3) {};
    
   \node [iterative3, below=0.65cm of sum] (ErrorTran) {$\begin{matrix}\alpha_{s,1}( \epsilon_1,k^*,\eta,\hat{a})\\
     \alpha_{s,2}(\epsilon_1,\epsilon_2,k^*,\eta,\hat{a},\hat{x}_1)\\
      \vdots\\
      \alpha_{s,r}(\epsilon_1,\dots,\epsilon_r,k^*,\eta,\hat{a},\hat{x}_1)
      \end{matrix}$};
       \node [iterativeframe, below right =0.359cm and -3.8000 of IMmodel] (iterativeframe) { };
       \node [iterative4, below right=0.5cm and -3.65cm of IMmodel] (Nonparametric) {$\dot{\hat{a}}  =- k_a \Theta(\eta)^{\!\top}\left[\Theta (\eta)\hat{a} +\textnormal{\col}(\eta_{n +1},\dots{},\eta_{2n})\right]$};
       \node [output, below = 0.5cm of ErrorTran.south east] (Nonparametric-a) {};
       \node [iterative5,below =0.15cm of Nonparametric] (Nonparametric2) {$\Theta (\bm{\eta} ^{\star} )a +\col(\bm{\eta}_{n +1}^{\star},\dots,\bm{\eta} ^{\star}_{2n})=\bm{0}$};
       \node [iterative5, below left =-0.45cm and 1.595cm of Nonparametric2] (Nonparametric3) {Time-varying Equation};
     \node [iterative5frame, above left =0.15cm and -3.585cm of Nonparametric3] (Nonparametric-name) {Nonparametric learning};
       \node [below =0.5cm of IMmodel] (Nonparametric4){};
      \node [right=1.5cm of ErrorTran.north east] (ErrorTran1){};
    \draw [draw,thick,->] (input) -- node[above] {$h(v,w)$} (sum);
    \draw [thick,->] (system.north)|-(measurements2)--(measurements)node[below] {$y$}-| node[pos=0.95,left] {$-$}(sum);
    \draw [thick,->] (u)|-(Ivlearn);
    \draw [thick,->] (IMmodel)--node[pos=0.5, below] {$\eta$}($(ErrorTran.east)+(0,-0.6125)$);
    \draw [thick,->] (Ivlearn)--node[pos=0.5, right] {$\hat{x}_2$}(IMmodel);
    \draw [thick,->] (Ivlearn)--node[pos=0.5, below] {$\hat{x}$}($(ErrorTran.east)+(0,0.45)$);
    \draw [thick,->] (sum)--node[pos=0.5,right]{$\epsilon_1=e$}(ErrorTran);
    \draw [thick,->] (ErrorTran.north east)--node[pos=0.5,above]{$\alpha_{s,r}$}(ErrorTran1)-|(controller.south);
    \draw [thick,->] (IMmodel)--node[pos=0.5,right]{$\eta$}(Nonparametric4);
    \draw [thick,->] (Nonparametric)-|(Nonparametric-a) --node[pos=0.25,left]{$\hat{a}$}(ErrorTran.south east);
\end{tikzpicture}    
\caption{Nonparametric learning in robust output regulation.}\label{framework2}
\end{figure}
We now provide an alternative proof of \citep*[Lemma 3]{wang2023nonparametric} in terms of the time-varying equation 
\begin{align}\label{a-explicit0}
\Theta (\bm{\eta} ^{\star} )a +\col(\bm{\eta}_{n +1}^{\star},\dots,\bm{\eta} ^{\star}_{2n})=\bm{0},
\end{align}
where $\bm{\eta} ^{\star}=\underbrace{Q \bm{\xi}}_{\theta}$.
\begin{lem}\label{Lemmamappingf-2} Under Assumptions \ref{ass0}--\ref{ass5-explicit}, the Hankel real matrix $\Theta (\bm{\eta} ^{\star} )$ is nonsingular and the linear time-varying equation \eqref{a-explicit0} has a unique solution $\check{a} (\theta(t))=a$ for all $t\geq 0$.
\end{lem}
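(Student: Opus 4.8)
The plan is to show that the constant coefficient vector $a=\col(a_1,\dots,a_n)$ of the characteristic polynomial of $\Phi(a(\sigma))$ is itself the unique solution of \eqref{a-explicit0}. First I would exploit the commutation $\Phi(a)\Xi(a)^{-1}=\Xi(a)^{-1}\Phi(a)$ together with the identity $\col(\Gamma,\Gamma\Phi(a),\dots,\Gamma\Phi(a)^{n-1})=I_n$ recorded in \eqref{XIQA-explicit}. Setting $\zeta\defeq\Xi(a)^{-1}\bm{\xi}$, these give $Q_j(a)\bm{\xi}=\Gamma\Phi(a)^{j-1}\zeta$, so each component of $\bm{\eta}^{\star}=\theta=Q\bm{\xi}$ reads $\theta_j=\Gamma\Phi(a)^{j-1}\zeta$ and in particular $\col(\theta_1,\dots,\theta_n)=\zeta$. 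Since $\dot{\bm{\xi}}=\Phi(a)\bm{\xi}$ and $\Xi(a)^{-1}$ commutes with $\Phi(a)$, the reduced vector obeys the same dynamics $\dot{\zeta}=\Phi(a)\zeta$.

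For existence I would invoke the Cayley--Hamilton theorem for the companion matrix $\Phi(a)$, whose characteristic polynomial is $\varsigma^n+a_n\varsigma^{n-1}+\dots+a_2\varsigma+a_1$; this gives $\Phi(a)^n=-\sum_{i=1}^n a_i\Phi(a)^{i-1}$, and left-multiplying by $\Gamma\Phi(a)^{k}$ and right-multiplying by $\zeta$ yields the recurrence $\theta_{n+1+k}=-\sum_{i=1}^n a_i\,\theta_{k+i}$ for $k=0,\dots,n-1$. Reading this relation off row-by-row shows exactly that $\Theta(\bm{\eta}^{\star})a+\col(\bm{\eta}^{\star}_{n+1},\dots,\bm{\eta}^{\star}_{2n})=\bm{0}$, i.e. the constant vector $a$ solves \eqref{a-explicit0} for every $t\ge 0$.

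The crux, and the step I expect to be the main obstacle, is the nonsingularity of $\Theta(\bm{\eta}^{\star})$, which simultaneously delivers uniqueness. Using $\theta_{p+q-1}=\Gamma\Phi(a)^{p-1}\bigl(\Phi(a)^{q-1}\zeta\bigr)$ I would factor the Hankel matrix as $\Theta(\bm{\eta}^{\star})=\col(\Gamma,\dots,\Gamma\Phi(a)^{n-1})\,[\zeta,\Phi(a)\zeta,\dots,\Phi(a)^{n-1}\zeta]$; the left factor is $I_n$ by \eqref{XIQA-explicit}, so $\Theta(\bm{\eta}^{\star})$ equals the Krylov matrix $W=[\zeta,\Phi(a)\zeta,\dots,\Phi(a)^{n-1}\zeta]$. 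Because Assumption \ref{ass0} makes the $n$ eigenvalues $\imath\hat{\omega}_j$ of $\Phi(a)$ simple, $\Phi(a)$ is diagonalizable with Vandermonde eigenvectors $p_j=\col(1,\imath\hat{\omega}_j,\dots,(\imath\hat{\omega}_j)^{n-1})$ satisfying $\Gamma p_j=1$; expanding $\zeta(t)=\sum_j d_j e^{\imath\hat{\omega}_j t}p_j$ factors $W=P\,\diag(d_j e^{\imath\hat{\omega}_j t})\,\tilde{V}$ as a product of two nonsingular Vandermonde matrices and a diagonal matrix. Hence $\det W\neq 0$ for all $t$ precisely when every modal coefficient $d_j$ is nonzero.

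Finally I would close the loop with Assumption \ref{ass5-explicit}. Writing $\Xi(a)p_j=\chi_M(\imath\hat{\omega}_j)p_j$, where $\chi_M(s)=s^{2n}+\sum_{l=1}^{2n}m_l s^{l-1}$ is the characteristic polynomial of the Hurwitz matrix $M$, and comparing $\bm{\hat{x}}_2=\Gamma\bm{\xi}=\sum_j C_j e^{\imath\hat{\omega}_j t}$ from Remark \ref{remPE} with $\bm{\xi}=\Xi(a)\zeta$, gives $C_j=\chi_M(\imath\hat{\omega}_j)\,d_j$. Since $M$ is Hurwitz and $\imath\hat{\omega}_j$ is purely imaginary, $\chi_M(\imath\hat{\omega}_j)\neq 0$, so $d_j\neq 0\iff C_j\neq 0$, and Assumption \ref{ass5-explicit} guarantees $C_j\neq 0$ for all $j$. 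Therefore $\Theta(\bm{\eta}^{\star})$ is nonsingular for all $t\ge 0$, the solution of \eqref{a-explicit0} is unique, and it coincides with the constant vector $a$ from the existence step, giving $\check{a}(\theta(t))=a$. The delicate points to check carefully will be that $\chi_M(\imath\hat{\omega}_j)\neq 0$ and that the modal expansion of $\zeta$ inherits the same nonvanishing coefficients as $\bm{\hat{x}}_2$.
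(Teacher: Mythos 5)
Your proof is correct and follows essentially the same route as the paper's: both factor the Hankel matrix through a Krylov matrix, diagonalize $\Phi(a)$ via the Vandermonde matrix $P_{\Lambda}$, and invoke Assumption \ref{ass5-explicit} to conclude that every modal coefficient is nonzero --- your identity $C_j=\chi_M(\imath\hat{\omega}_j)d_j$ with $\chi_M(\imath\hat{\omega}_j)\neq 0$ is just the paper's nonsingularity of $\Xi(a)=\chi_M(\Phi(a))$ (guaranteed by $M$ Hurwitz versus purely imaginary $\imath\hat{\omega}_j$) relocated into modal coordinates, since the paper keeps $\Xi(a)^{-1}$ as a left factor of $\Theta$ while you absorb it into $\zeta$. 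The one genuine addition is your explicit Cayley--Hamilton verification that the constant vector $a$ satisfies \eqref{a-explicit0}, which the paper leaves implicit when it writes the solution formula $\check{a}(\theta)=-\Theta(\theta)^{-1}\col(\theta_{n+1},\dots,\theta_{2n})$ and identifies it with $a$.
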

\begin{proof}   
From $Q = \col(Q_1, \dots, Q_{2n})$ and $\theta =Q \bm{\xi} $, $Q_{j}(a)=\Gamma \Xi(a)^{-1}\Phi(a )^{j-1} \in \mathds{R}^{1\times n }$, $1\leq j\leq 2n $, the real Hankel matrix is given by
\begin{align}
\Theta (\theta)=& \left[\begin{matrix}Q_1 \bm{\xi}&Q_2 \bm{\xi}&\cdots&Q_n \bm{\xi}\\
Q_2 \bm{\xi}&Q_3 \bm{\xi}&\cdots&Q_{n+1} \bm{\xi}\\
\vdots&\vdots&\ddots&\vdots\\
Q_{n}\bm{\xi} &Q_{n +1}\bm{\xi}&\cdots&Q_{2n -1}\bm{\xi}
\end{matrix}\right]\nonumber\\
=& \left[\begin{matrix}Q_1 \bm{\xi}&Q_1\Phi(a)  \bm{\xi}&\cdots&Q_1 \Phi(a)^{n -1}\bm{\xi}\\
Q_2 \bm{\xi}&Q_2\Phi(a) \bm{\xi}&\cdots&Q_{2} \Phi(a)^{n -1}\bm{\xi}\\
\vdots&\vdots&\ddots&\vdots\\
Q_{n}\bm{\xi} &Q_{n}\Phi(a)\bm{\xi}&\cdots&Q_{n}\Phi(a)^{n -1}\bm{\xi}
\end{matrix}\right]\nonumber\\
=&\ \underbrace{\col(Q_1,\dots,Q_n)}_{\Xi(a)^{-1}}  \underbrace{\left[\begin{matrix}\bm{\xi}  &\Phi(a) \bm{\xi} &\dots & \Phi(a)^{n -1} \bm{\xi} \end{matrix}\right]}_{\Pi}.\nonumber
\end{align}
where the columns of the Krylov matrix $\Pi$ form the Krylov subspace.
     Under Assumption \ref{ass0}, the matrix $\Phi(a)$ is diagonalizable with distinct eigenvalues $\lambda_1=\imath \hat{\omega}_{1},\dots,\lambda_n=\imath \hat{\omega}_{n}$. 
    Moreover, the matrix $\Phi(a)$ is in companion form. 
    Therefore, from \cite{kalman1984generalized} and \cite{neagoe1996inversion}, there exists  a diagonalizable matrix $\Lambda$ with $\Lambda=\textnormal{diag}(\lambda_1,\dots,\lambda_n)$ and a nonsingular Vandermonde matrix $$P_{\Lambda}=\left[\begin{matrix}1  & 1 & \dots &1\\
    \lambda_1  & \lambda_2 & \cdots& \lambda_n\\
    \vdots &\vdots& \ddots & \vdots \\
    \lambda_1^{n-1}  & \lambda_2^{n-1} & \cdots& \lambda_n^{n-1}\\
    \end{matrix}\right]$$  such that 
    $\Phi(a)= P_{\Lambda}\Lambda P_{\Lambda}^{-1}$.
As a result, from \eqref{stagerator}, let $\nu(t)=P_{\Lambda}^{-1}\bm{\xi}(t) $, which results in
    $$ \nu(t)=\underbrace{\textnormal{\col}(e^{\lambda_1 t}\nu_1(0),\dots, e^{\lambda_n t}\nu_n(0))}_{e^{\Lambda t}\nu(0)}.$$
    Hence, the time-varying matrix $\Pi(t)$ admits the form
    \begin{align*}
    \Pi(t)=&\ \left[\begin{matrix}\bm{\xi}(t)  &\Phi(a) \bm{\xi}(t) &\cdots & \Phi(a)^{n-1} \bm{\xi}(t) \end{matrix}\right]\\
    =&\ P_{\Lambda} \left[\begin{matrix}\nu(t)  &\Lambda \nu(t) &\cdots & \Lambda^{n -1}\nu(t) \end{matrix}\right]\\
    =&\ P_{\Lambda} \textnormal{diag}(e^{\lambda_1 t}\nu_1(0),\dots, e^{\lambda_n t}\nu_n(0))P_{\Lambda}^{\top}.
    \end{align*}
It is noted from $\bm{\xi}= \col\!\left(\bm{\hat{x}}_2,\frac{d\bm{\hat{x}}_2}{dt},\dots,\frac{d^{n-1}\bm{\hat{x}}_2}{dt^{n-1}}\right)\!$ and \eqref{remPE-trsin} that
\begin{align*}\nu(0)&= P_{\Lambda}^{-1}\bm{\xi}(0)\\
=&\  P_{\Lambda}^{-1} \col\Big(\sum\limits_{i=1}^{n}C_{j}(v(0), w,\sigma),\dots, \sum\limits_{j=1}^{n}C_{j}(v(0), w,\sigma)\lambda_j^{n-1}\Big)\\
=&\  \underbrace{P_{\Lambda}^{-1}P_{\Lambda}}_{I_n}\col(C_{1}(v(0), w,\sigma),\dots, C_{n}(v(0), w,\sigma)).
\end{align*}
From Assumption \ref{ass5-explicit},  for any $v(0)\in \mathds{V}$, $w\in \mathds{W}$, and $\sigma\in \mathds{S}$, $C_{i}(v(0), w,\sigma)\neq 0$ results in $\nu_i(0)\neq 0$, for $i=1,\dots,n$.
Hence, the matrix $\Pi(t)$ is nonsingular due to the fact that $\nu_i(0)\neq 0$, $\forall i in\in\{i,\dots,n\}$. 
Therefore, $\Theta (\theta)$ is nonsingular. Equation \eqref{XIQA-explicit} admits the solution
\begin{align}\label{a-explicit}
a =\underbrace{-\Theta(\theta)^{-1}\textnormal{\col}(\theta_{n+1},\dots,\theta_{2n})}_{\check{a} (\theta)}.
\end{align}
\end{proof}

 From Lemma 3 in \cite{wang2023nonparametric}, the existence of a nonlinear mapping $\chi\left(\eta, \check{a} (\eta )\right)$ strictly relies on the solution of a time-varying equation,  
$$\Theta(\eta)\check{a} (\eta)+\textnormal{\col}(\eta_{n +1},\dots,\eta_{2n})=\textbf{0}.$$
It is noted that $\Theta(\eta(t))$ is not always invertible over $t\geq 0$, and there may be time instants where the inverse of $\Theta(\eta(t))$ may not be well-defined. 

From Assumptions \ref{ass0} and \ref{ass0i}, it follows that $\bm{\eta} ^{\star}$ and $a$ belong to some compact set $\mathds{D}$. For the composite system \eqref{second-nonlinear-systems}, as shown in Fig.~\ref{framework2}, we propose the regulator
\begin{subequations}\label{explicit-mas}
\begin{align}
\dot{\hat{a}}  &=- k_a \Theta(\eta)^{\!\top}\left[\Theta (\eta)\hat{a} +\textnormal{\col}(\eta_{n +1},\dots{},\eta_{2n})\right],\\
u &= \alpha_{s,r}(\epsilon_1,\epsilon_2,\dots,\epsilon_r,k^*,\eta,\hat{a}),\label{explicit-mas3}
\end{align}\end{subequations}
where $k_a$ is positive constant, $\eta$ is generated in \eqref{explicit-mas1}, $\hat{a}$ is the estimate of the unknown parameter vector $a$, $\rho (\cdot)\geq 1$ is a positive smooth function, $\epsilon_1=e$,
\begin{align}\label{alpha-function-s}
\epsilon_{i+1}=&\ \hat{x}_{i+1}-\alpha_{s,i}(\epsilon_1,\dots,\epsilon_i,k^*,\eta,\hat{a},\hat{x}_1),\nonumber\\
\alpha_{s,1}( \epsilon_1,k^*,\eta,\hat{a})=&-k^*\rho(\epsilon_1)\epsilon_1+\chi_s(\eta,\hat{a}),\nonumber\\
\alpha_{s,2}(\epsilon_1,\epsilon_2,k^*,\eta,\hat{a},\hat{x}_1)
=&-b\epsilon_1-\epsilon_2+\lambda_{2}\hat{x}_1+\frac{\partial \alpha_{s,1} }{\partial \eta}\dot{\eta}\nonumber\\
&+b\frac{\partial \alpha_{s,1}}{\partial \epsilon_1}(\epsilon_2-{k}^*\rho(\epsilon_1)\epsilon_1)\nonumber\\
&-\frac{1}{2}\epsilon_2\!\left(\frac{\partial \alpha_{s,1}}{\partial \epsilon_1}\right)^{\!2}+\frac{\partial \alpha_{s,1} }{\partial \hat{a}}\dot{\hat{a}}\nonumber\\
&+\frac{\partial \alpha_{s,1} }{\partial k^*}\dot{k}^*,\nonumber\\
\alpha_{s,i}(\epsilon_1,\dots,\epsilon_i,k^*,\eta,\hat{a},\hat{x}_1)=&-\epsilon_{i-1}-\epsilon_i+\lambda_{i}\hat{x}_{1}+\frac{\partial \alpha_{s,i-1} }{\partial \eta}\dot{\eta}\nonumber\\
&+\frac{\partial \alpha_{s,i-1} }{\partial \hat{x}_1}\dot{\hat{x}}_1+ \sum_{j=2}^{i-1}\frac{\partial \alpha_{s,i-1} }{\partial \epsilon_{j}}\dot{\epsilon}_j\nonumber\\
 &+b\frac{\partial \alpha_{s,i-1}}{\partial \epsilon_1}(\epsilon_2- {k}^*\rho(\epsilon_1)\epsilon_1)\nonumber\\
 &-\frac{1}{2}\epsilon_i\!\left(\frac{\partial \alpha_{s,i-1}}{\partial \epsilon_1}\right)^{\!2}\nonumber\\
 &+\frac{\partial \alpha_{s,i} }{\partial \hat{a}}\dot{\hat{a}}+\frac{\partial \alpha_{s,1} }{\partial k^*}\dot{k}^*,\nonumber\\
& \quad\quad\quad\quad\quad i=3,\dots,r,
\end{align}
where $\dot{k}^*$ will be zero when $k^*$ is a constant, $\hat{a}$ is generated in \eqref{explicit-mas}, $\hat{x}_{1}, \dots, \hat{x}_{r}$ and $\eta$ are generated in \eqref{input-filter} and \eqref{explicit-mas1}, respectively.
The smooth function $\chi_s(\eta,\hat{a})$ is given by
%
\begin{align}\label{chisatu}
\chi_{s}(\eta, \hat{a})=\chi(\eta, \hat{a})\Psi(\delta+1-\|\col(\eta, \hat{a})\|^2),
\end{align}
where
$$\chi(\eta,\hat{a})\equiv\Gamma \Xi (\hat{a})\textnormal{\col}(\eta_{1},\dots{},\eta_{n}),$$
with $\Psi(\varsigma)=\frac{\psi(\varsigma)}{\psi(\varsigma)+\psi(1-\varsigma)}$, $\delta=\max\limits_{(\eta,\hat{a})\in \mathds{D}}\|\col(\eta, \hat{a})\|^2$ and
$$\psi(\varsigma)=\left\{\begin{array}{cc}e^{-1/\varsigma} & \textnormal{for}\;\varsigma >0,\\
0 & \textnormal{for}\;\varsigma \leq0.
\end{array}\right.$$
Performing the coordinate/input transformations
\begin{align*}
&\bar{\eta}_e =\bar{\eta}+b^{-1}N\epsilon_1,\quad\bar{a}=\hat{a}-a,\quad\hat{x}_2=\epsilon_2+\alpha_{s,1},\\
&\bar{\chi}_s(\bar{\eta}_e,\bar{a},\mu)= \chi_s(\bar{\eta}_e+\bm{\eta}^*,\bar{a}+a)-\chi(\bm{\eta}^*,a),
\end{align*}
leads to the augmented system: 
\begin{subequations}\label{bara-deriv}\begin{align}
\dot{\bar{z}} 
=&\ \bar{f}(\bar{z},\epsilon_1,\mu),\\
\dot{\bar{x}}_c= &\ M_c\bar{x}_c+\bar{G}_c(\bar{z}, \epsilon_1, \mu), \\
\dot{e}_1=&\ b(\epsilon_2-k^*\rho(\epsilon_1)\epsilon_1)+b\bar{\chi}_s(\bar{\eta}_e,\bar{a},\mu)\nonumber\\
&\ +b\bar{x}_2 +\bar{g}_1(\bar{z}, \epsilon_1, \mu), \\
 \dot{\hat{x}}_i=&\ \hat{x}_{i+1}-\lambda_i\hat{x}_1,\;\;i=2,\dots,r-1,\;\;\\
  \dot{\hat{x}}_r=&\ u-\lambda_r\hat{x}_1,\\
\dot{\bar{a}}=&-k_a \Theta(\bm{\eta}^{\star})^{\!\top}\Theta (\bm{\eta}^{\star} )\bar{a} - k_1 \bar{O}(\bar{\eta}_e,\bar{a}),\label{bara-deriv-a}
\end{align} \end{subequations}
where 
\begin{align*}
\bar{O}(\bar{\eta}_e,\bar{a})=&\
\Theta(\bm{\eta}^{\star})^{\!\top}\Theta (\bar{\eta}_e)\bar{a}+\Theta (\bar{\eta}_e)^{\!\top}\Theta (\bm{\eta}^{\star} )\bar{a}\\
&+\Theta(\bar{\eta}_e)^{\!\top} 
\Theta(\bar{\eta}_e)a +\Theta(\bar{\eta}_e)^{\!\top}
\Theta(\bar{\eta}_e)\bar{a}\\
&+\Theta(\bm{\eta}^{\star})^{\!\top}
\Theta(\bar{\eta}_e){a} +
\Theta(\bm{\eta}^{\star})^{\!\top} \textnormal{\col}(\bar{\eta}_{e,n +1},\dots{},\bar{\eta}_{e,2n})\\
&+\Theta(\bar{\eta}_e)^{\!\top}\textnormal{\col}(\bar{\eta}_{e,n +1},\dots{},\bar{\eta}_{e,2n}).
\end{align*}
Moreover, the $\bar{a}$-subsystem \eqref{bara-deriv-a} has a form similar to system (27d) in \cite{wang2023nonparametric}. 
As a result,  the $\bar{a}$-subsystem \eqref{bara-deriv-a}, under Assumptions \ref{ass0}, \ref{ass1}, and \ref{H2}, admits the following lemma (see Lemma 4 in \cite{wang2023nonparametric}).
\begin{lem}\label{lemmabodev}
For the system \eqref{bara-deriv-a} under Assumptions \ref{ass0}, \ref{ass1}, \ref{ass0i}, and \ref{ass5-explicit}, Properties 3 and 4 are satisfied:
\begin{prop}\label{property3}%
There are smooth integral Input-to-State Stable Lyapunov functions $V_{\bar{a}}\equiv  V_{\bar{a}}\big(\bar{a}\big)$ satisfying
\begin{align}
\underline{\alpha}_{\bar{a}}(\|\bar{a}\|^2)&\leq V_{\bar{a}}(\bar{a})\leq \bar{\alpha}_{\bar{a}}(\|\bar{a}\|^2),\notag\\
\dot{V}_{\bar{a}}\big|_{\eqref{explicit-mas1}} &\leq  -\alpha_{\bar{a}}(V_{\bar{a}}) +c_{ae}\|\bar{Z}\|^2+c_{ae}e^2,\label{V2-tildea}
\end{align}
for positive constant $c_{ae}$, and comparison functions $\underline{\alpha}_{\bar{a}}(\cdot)\in \mathcal{K}_{\infty}$, $\bar{\alpha}_{\bar{a}}(\cdot)\in \mathcal{K}_{\infty}$, $\alpha_{\bar{a}}(\cdot)\in \mathcal{K}_{o}$.
\end{prop}
\begin{prop} \label{property4} There are positive constants $\phi_0$, $\phi_1$, and $\phi_2$ such that
\begin{align*}
|b\bar{\chi}_s(\bar{\eta}_e,\bar{a})|^2\leq &\ \phi_0e^2+\phi_1\|\bar{Z}\|^2+\phi_{2}\alpha_{\bar{a}}(V_{\bar{a}}).\end{align*}
\end{prop}
\end{lem}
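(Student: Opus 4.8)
The plan is to treat Lemma~\ref{lemmabodev} as two coupled dissipation estimates for the $\bar a$-subsystem \eqref{bara-deriv-a}, both hinging on one structural fact I would extract from Lemma~\ref{Lemmamappingf-2}: the matrix $\Theta(\bm{\eta}^{\star}(t))^{\top}\Theta(\bm{\eta}^{\star}(t))$ is \emph{uniformly} positive definite in $t$. Indeed, from the factorization $\Theta(\bm{\eta}^{\star})=\Xi(a)^{-1}\Pi$ with $\Pi=P_\Lambda\,\textnormal{diag}(e^{\lambda_1 t}\nu_1(0),\dots,e^{\lambda_n t}\nu_n(0))\,P_\Lambda^{\top}$ and $\lambda_j=\imath\hat{\omega}_j$, the modulus $|\det\Pi(t)|=|\det P_\Lambda|^2\prod_j|\nu_j(0)|$ is constant because $|e^{\imath\hat{\omega}_j t}|=1$; together with $\bm{\eta}^{\star}\in\mathds{D}$ compact this yields constants $0<\underline m\le\bar m$ with $\underline m I\le\Theta(\bm{\eta}^{\star})^{\top}\Theta(\bm{\eta}^{\star})\le\bar m I$ for all $t\ge0$. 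I would also record the elementary bound $\|\bar{\eta}_e\|^2\le c(\|\bar Z\|^2+e^2)$, which follows from $\bar{\eta}_e=\bar{\eta}+b^{-1}N\epsilon_1$, $\epsilon_1=e$, and $\bar{\eta}$ being a subvector of $\bar x_c\subset\bar Z$.

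For Property~\ref{property3} I would take the integral-ISS candidate $V_{\bar a}(\bar a)=\ln(1+\|\bar a\|^2)$, so the sandwich bounds hold trivially and the gradient carries the crucial weight $2\bar a/(1+\|\bar a\|^2)$. Along \eqref{bara-deriv-a} the linear part gives $-2k_a\bar a^{\top}\Theta(\bm{\eta}^{\star})^{\top}\Theta(\bm{\eta}^{\star})\bar a/(1+\|\bar a\|^2)\le-2k_a\underline m\,\|\bar a\|^2/(1+\|\bar a\|^2)=:-\alpha_{\bar a}(V_{\bar a})$ with $\alpha_{\bar a}(s)=2k_a\underline m(1-e^{-s})\in\mathcal{K}_o$, which is exactly why a saturating (logarithmic) Lyapunov function is needed rather than a quadratic one. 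The work is then to dominate $-2k_1\bar a^{\top}\bar O/(1+\|\bar a\|^2)$. Since $\Theta(\cdot)$ is linear, each of the seven summands of $\bar O$ is either linear or quadratic in $\bar{\eta}_e$; after inserting the weight $1/(1+\|\bar a\|^2)$ and using $\|\bar a\|^2/(1+\|\bar a\|^2)\le1$ and $\|\bar a\|/(1+\|\bar a\|^2)\le\tfrac12$, the terms quadratic in $\bar{\eta}_e$ are already bounded by $c\|\bar{\eta}_e\|^2$, while each term linear in $\bar{\eta}_e$ is split by Young's inequality into a part proportional to $\|\bar a\|^2/(1+\|\bar a\|^2)$ plus a part $c\kappa\|\bar{\eta}_e\|^2$. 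The first pieces are absorbed into $\tfrac12\alpha_{\bar a}(V_{\bar a})$ (taking $\kappa$ large, or $k_a$ large and $k_1$ small), and the remaining quadratic pieces are turned into $c_{ae}(\|\bar Z\|^2+e^2)$ via the recorded bound on $\|\bar{\eta}_e\|^2$; renaming $\tfrac12\alpha_{\bar a}$ as $\alpha_{\bar a}$ gives \eqref{V2-tildea}.

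For Property~\ref{property4} I would exploit that $\chi_s(\eta,\hat a)=\chi(\eta,\hat a)\Psi(\delta+1-\|\col(\eta,\hat a)\|^2)$ is smooth and, because $\Psi$ vanishes once $\|\col(\eta,\hat a)\|^2\ge\delta+1$, has compact support in $(\eta,\hat a)$; hence $\chi_s$ is globally bounded and globally Lipschitz, and $\bar{\chi}_s(0,0)=0$ (at the nominal point $\|\col(\bm{\eta}^{\star},a)\|^2\le\delta$ forces $\Psi=1$). Splitting the increment as $\bar{\chi}_s(\bar{\eta}_e,\bar a)=[\bar{\chi}_s(\bar{\eta}_e,\bar a)-\bar{\chi}_s(0,\bar a)]+[\bar{\chi}_s(0,\bar a)-\bar{\chi}_s(0,0)]$, the first bracket is $O(\|\bar{\eta}_e\|)$ by Lipschitzness, feeding $\phi_0 e^2+\phi_1\|\bar Z\|^2$; the second bracket is a smooth function of $\bar a$ alone that vanishes at $0$ and is globally bounded, so $|\,\cdot\,|^2\le c\min(\|\bar a\|^2,1)\le 2c\,\|\bar a\|^2/(1+\|\bar a\|^2)=(c/k_a\underline m)\,\alpha_{\bar a}(V_{\bar a})$, giving the $\phi_2\alpha_{\bar a}(V_{\bar a})$ term. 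Multiplying through by $b^2$ fixes $\phi_0,\phi_1,\phi_2$.

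The main obstacle is the middle step: organizing the seven-term expansion of $\bar O$ and verifying that every contribution splits cleanly into a part absorbable by the \emph{saturating} rate $\alpha_{\bar a}(V_{\bar a})$ and a part controllable by the quadratic input $\|\bar Z\|^2+e^2$. This succeeds only because of the two features flagged above --- the uniform lower bound $\underline m$ on $\Theta(\bm{\eta}^{\star})^{\top}\Theta(\bm{\eta}^{\star})$, which guarantees a genuine (not merely time-varying-sign) decay rate, and the logarithmic Lyapunov function, whose gradient weight $1/(1+\|\bar a\|^2)$ is precisely what renders the superlinear cross term $\Theta(\bar{\eta}_e)^{\top}\Theta(\bar{\eta}_e)\bar a$ integrable against the input. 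A quadratic $V_{\bar a}$ would fail here, which is the structural reason the estimate is of integral-ISS ($\mathcal{K}_o$) type.
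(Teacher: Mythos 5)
The paper does not actually prove Lemma~\ref{lemmabodev}: it observes that the $\bar a$-subsystem \eqref{bara-deriv-a} has the same form as system (27d) of \cite{wang2023nonparametric} and imports the statement wholesale from Lemma~4 there. Your proposal is therefore a genuinely different route --- a self-contained reconstruction rather than a citation --- and, as far as I can check it against the structures defined in this paper, it is essentially correct and identifies exactly the two ingredients that make the statement an \emph{integral}-ISS (class-$\mathcal{K}_o$) estimate rather than an ISS one: the logarithmic Lyapunov function $V_{\bar a}=\ln(1+\|\bar a\|^2)$, whose gradient weight $1/(1+\|\bar a\|^2)$ saturates the terms of $\bar O$ that are superlinear in $\bar a$, and the uniform-in-$t$ positive definiteness of $\Theta(\bm{\eta}^{\star})^{\!\top}\Theta(\bm{\eta}^{\star})$ obtained from the Krylov factorization of Lemma~\ref{Lemmamappingf-2} together with the constancy of $|\det\Pi(t)|$ (the eigenvalues $\imath\hat\omega_j$ have unit-modulus exponentials, and $\sigma_{\min}\geq|\det|/\sigma_{\max}^{\,n-1}$ converts the determinant bound into the eigenvalue bound $\underline m$). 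Your choice $\alpha_{\bar a}(s)=ck_a\underline m(1-e^{-s})$ is indeed bounded and of class $\mathcal{K}$, matching the $\mathcal{K}_o$ requirement, and the splitting of Property~\ref{property4} into a Lipschitz increment in $\bar\eta_e$ plus a bounded increment in $\bar a$ (using $\min(s,1)\leq 2s/(1+s)$ and the compact support of $\chi_s$ enforced by $\Psi$) is the right mechanism for producing the $\phi_2\alpha_{\bar a}(V_{\bar a})$ term. What the citation route buys the paper is brevity; what your route buys is an actual argument, and it makes visible \emph{why} a quadratic $V_{\bar a}$ cannot work here.

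Two points deserve more care before this could stand as a proof. First, Assumption~\ref{ass5-explicit} only asserts $C_j(v(0),w,\sigma)\neq 0$ pointwise, whereas Properties~\ref{property3} and \ref{property4} are invoked in Theorem~\ref{Thm-Lemma-nonpara} with constants $c_{ae},\phi_0,\phi_1,\phi_2$ that must hold uniformly over $\mu\in\mathds{V}\times\mathds{W}\times\mathds{S}$; your $\underline m$ is uniform in $t$ for fixed $\mu$, but to make it uniform in $\mu$ you need to add that $(v(0),w,\sigma)\mapsto|\nu_j(0)|$ is continuous and nonvanishing on a compact set, hence bounded below. Second, in the seven-term bookkeeping the Young-inequality splittings produce terms proportional to $\|\bar a\|^2/(1+\|\bar a\|^2)^2$ and $\|\bar a\|/(1+\|\bar a\|^2)$; these are absorbable into $\tfrac12\alpha_{\bar a}(V_{\bar a})$ for an arbitrarily small Young parameter, so your parenthetical ``taking $k_a$ large and $k_1$ small'' is not actually needed --- the constants $c_{ae}$ simply inflate --- and stating it as a requirement would weaken the lemma, since $k_a$ appears in \eqref{explicit-mas} as a free positive gain.
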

\begin{rem}\label{rem:proof_lemma_bodev}
 Under Assumption \ref{ass5-explicit}, the matrix $\Theta(\bm{\eta}^\star)$ is bounded and satisfies $\Theta(\bm{\eta}^\star)^\top \Theta(\bm{\eta}^\star) \geq \Theta_m I > 0$.
The perturbation term $\bar{O}(\bar{\eta}_e, \bar{a})$ in \eqref{bara-deriv} satisfies the bound $\|\bar{O}\| \leq c_1 \|\bar{a}\| (\|\bar{a}\| + 1) \|\bar{\eta}_e\|$ for some $c_1 > 0$.
To counteract the growth of $\bar{a}$ in the perturbation, we employ the Lyapunov function $V_{\bar{a}}(\bar{a}) = \ln(1+\|\bar{a}\|^2)$.
Its time derivative along \eqref{bara-deriv} satisfies
\begin{align*}
    \dot{V}_{\bar{a}} &= \frac{-2k_1 \bar{a}^\top \Theta^\top \Theta \bar{a} - 2k_1 \bar{a}^\top \bar{O}}{1+\|\bar{a}\|^2} \\
    &\leq \frac{-k_1 \Theta_m \|\bar{a}\|^2 + 2k_1 c_1 \|\bar{a}\|^2 (\|\bar{a}\|+1) \|\bar{\eta}_e\|}{1+\|\bar{a}\|^2} \\
    &\leq -\underbrace{\frac{k_1 \Theta_m \|\bar{a}\|^2}{2(1+\|\bar{a}\|^2)}}_{\alpha_{\bar{a}}(V_{\bar{a}})} + c_{ae}\|\bar{\eta}_e\|^2,
\end{align*}
where $c_{ae}$ is a sufficiently large constant derived using Young's inequality. Since $\|\bar{\eta}_e\|^2 \leq \ell_1 \|\bar{Z}\|^2 + \ell_2 e^2$, inequality \eqref{V2-tildea} follows.

For Property \ref{property4}, note that $\bar{\chi}_s(\bar{\eta}_e, \bar{a})$ is smooth and vanishes at the origin $(\bar{\eta}_e, \bar{a})=(0,0)$.
Also, $b$ is constant. Then, there exist class $\mathcal{K}_\infty$ functions $\gamma_a(\cdot)$ and $\gamma_\eta(\cdot)$ such that 
$$ |b\bar{\chi}_s(\bar{\eta}_e, \bar{a})|^2 \leq \gamma_\eta(\|\bar{\eta}_e\|^2) + \gamma_a(\|\bar{a}\|^2). $$
Using the property that $\alpha_{\bar{a}}(V_{\bar{a}})$ behaves linearly for small $\bar{a}$ and saturates for large $\bar{a}$, and considering the compact support or boundedness of $\chi_s$ in the proposed design, the term $\gamma_a(\|\bar{a}\|^2)$ can be dominated by $\phi_2 \alpha_{\bar{a}}(V_{\bar{a}})$. Substituting $\|\bar{\eta}_e\|^2$ with linear combinations of $\|\bar{Z}\|^2$ and $e^2$ yields the result.
\end{rem}
\begin{thm}\label{Thm-Lemma-nonpara}%
For system \eqref{Main-sys-com},
under Assumptions \ref{ass0}--\ref{ass5-explicit}, there is a sufficiently large positive smooth function $\rho(\cdot)$ and a positive real number $k^*$ such that the controller
\begin{align}
u &= -\alpha_{s,r}(\epsilon_1,\epsilon_2,\dots,\epsilon_r,k^*,\eta,\hat{a}) ,\label{nonpara-control} 
\end{align}
 solves Problem \ref{Prob: second-order-Output-regulation}, and there exists a continuous
positive definite function $U\equiv U(\bar{Z}, \epsilon_1, \dots,\epsilon_r, \bar{a})$
such that, for all $\mu\in \mathds{S}\times \mathds{V}\times \mathds{W}$,
\begin{align}\label{dotU}\dot{U}\leq -\big\|\bar{Z}\big\|^2-\sum\nolimits_{j=1}^{r}\epsilon_{j}^2-\alpha_{\bar{a}}(V_{\bar{a}}).\end{align}
\end{thm}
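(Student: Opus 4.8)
The plan is to mirror the backstepping argument in the proof of Theorem~\ref{Theorem-1}, augmenting the terminal Lyapunov function with a scaled copy of the integral ISS function $V_{\bar a}$ supplied by Property~\ref{property3}. The only structural change relative to Theorem~\ref{Theorem-1} is that the steady-state mapping $\chi$ is replaced by its parameter-dependent surrogate $\chi_s(\eta,\hat a)$, so the cross term $\bar\chi$ in the $\epsilon_1$-equation becomes $\bar\chi_s(\bar\eta_e,\bar a,\mu)$, carrying the learning error $\bar a$. First I would apply the changing-supply-rate technique of \citet{sontag1995changing} to Property~\ref{property1} to obtain, for any prescribed smooth $\Delta_Z(\bar Z)>0$, a function $V_1(\bar Z)$ with $\dot V_1\le -\Delta_Z(\bar Z)\|\bar Z\|^2+\hat\gamma^*\hat\gamma(\epsilon_1)\epsilon_1^2$, exactly as in Theorem~\ref{Theorem-1}.

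Next I would run the identical recursion $U_i=U_{i-1}+\epsilon_i^2$ along \eqref{bara-deriv}, using the virtual controls $\alpha_{s,i}$ of \eqref{alpha-function-s}. Since the $\tfrac{\partial \alpha_{s,i-1}}{\partial \hat a}\dot{\hat a}$ terms are built into each $\alpha_{s,i}$, the indefinite contributions cancel exactly as before, and at step $r$ one obtains
\begin{align}
\dot U_r \le{}& -\Delta_Z(\bar Z)\|\bar Z\|^2 + r\,\Delta_{s,1}(\epsilon_1,\bar Z,\bar a,\mu) \notag\\
& -\big(2bk^*\rho(\epsilon_1)-3-\hat\gamma^*\hat\gamma(\epsilon_1)\big)\epsilon_1^2 - \sum\nolimits_{j=2}^{r}\epsilon_j^2, \notag
\end{align}
where $\Delta_{s,1}=b^2\bar x_2^2+\bar g_1^2+b^2\bar\chi_s^2$ is the analogue of $\Delta_1$. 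I would then bound $b^2\bar x_2^2+\bar g_1^2$ by Property~\ref{property2} and $b^2\bar\chi_s^2$ by Property~\ref{property4}, giving
$$\Delta_{s,1}\le \tilde\gamma_1(\bar Z)\|\bar Z\|^2+\epsilon_1^2\,\tilde\gamma_2(\epsilon_1)+\phi_2\,\alpha_{\bar a}(V_{\bar a})$$
for smooth positive $\tilde\gamma_1$ and $\tilde\gamma_2$ obtained by merging $\gamma_{g0},\gamma_{g1}$ with $\phi_0,\phi_1$.

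I would then close the loop with the composite function $U=V_1(\bar Z)+\sum_{j=1}^{r}\epsilon_j^2+c_a V_{\bar a}(\bar a)$ and add $c_a$ times the dissipation inequality \eqref{V2-tildea}. The term $r\phi_2\alpha_{\bar a}(V_{\bar a})$ carried up from $\dot U_r$ is dominated by $-c_a\alpha_{\bar a}(V_{\bar a})$ once $c_a\ge r\phi_2+1$, while the residual $c_a c_{ae}\|\bar Z\|^2$ and $c_a c_{ae}\epsilon_1^2$ terms are absorbed by enlarging $\Delta_Z(\bar Z)\ge r\tilde\gamma_1(\bar Z)+c_ac_{ae}+1$ and $\rho(\epsilon_1)$, and by taking $k^*$ large enough that $2bk^*\rho(\epsilon_1)-3-\hat\gamma^*\hat\gamma(\epsilon_1)-r\tilde\gamma_2(\epsilon_1)-c_ac_{ae}\ge 1$. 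This produces \eqref{dotU}. Since the right-hand side of \eqref{dotU} is negative definite in $(\bar Z,\epsilon_1,\dots,\epsilon_r,\bar a)$—using that $\alpha_{\bar a}(V_{\bar a})$ is positive definite in $\bar a$—and $U$ is radially unbounded by Property~\ref{property3}, global asymptotic stability of the origin follows by Lyapunov's direct method, whence $\lim_{t\to\infty}e(t)=\lim_{t\to\infty}\epsilon_1(t)=0$.

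The main obstacle is the integral ISS character of the learning subsystem: $\alpha_{\bar a}$ is only class $\mathcal K_o$, so one cannot extract a negative $\|\bar a\|^2$ term and must instead propagate $\alpha_{\bar a}(V_{\bar a})$ through the entire cascade. The argument therefore hinges on the precise matching between the $+\phi_2\alpha_{\bar a}(V_{\bar a})$ produced by Property~\ref{property4} and the $-\alpha_{\bar a}(V_{\bar a})$ delivered by Property~\ref{property3}; moreover the gains must be fixed in the order $c_a$ first, then $\Delta_Z,\rho,k^*$, so that enlarging $c_a$ to dominate the $\bar a$-channel does not re-inflate the $\bar Z$- and $\epsilon_1$-channels beyond what the smooth design functions can subsequently dominate.
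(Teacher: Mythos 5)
Your proposal is correct and follows essentially the same route as the paper's proof: the same backstepping recursion with the $\alpha_{s,i}$ virtual controls, the same bound on $\Delta_a=b^2\bar x_2^2+\bar g_1^2+b^2\bar\chi_s^2$ via Properties~\ref{property2} and \ref{property4}, and the same composite Lyapunov function $U=U_r+\phi_{\bar a}V_{\bar a}$ (your $c_a$ is the paper's $\phi_{\bar a}$) with the gains fixed in the same order as in \eqref{rho-inquality-2}. No substantive differences to report.
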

\begin{proof} From Property \ref{property1}, the changing supply rate technique \citep{sontag1995changing} can be applied to show that, given any smooth function $\Delta_{Z}(\bar{Z})>0$, there exists a continuous function $V_{1}(\bar{Z})$ satisfying
$$\underline{\alpha}_{1}\big(\big\|\bar{Z} \big\|^2\big)\leq V_{1}\big( \bar{Z}  \big)\leq\overline{\alpha}_{1}\big(\big\|\bar{Z} \big\|^2\big)$$
 for some class $\mathcal{K}_{\infty}$ functions $\underline{\alpha}_{1}(\cdot)$ and $\overline{\alpha}_{1}(\cdot)$ such that, for all $\mu\in \Sigma$, along the trajectories of the $Z $ subsystem, $$\dot{V}_{1} \leq-\Delta_{Z}(\bar{Z} )\big\|\bar{Z} \big\|^2+ \hat{\gamma}^* \hat{\gamma} \left(\epsilon_1\right)\epsilon_1^2, $$
where $\hat{\gamma}^*$ is known positive constant and $\hat{\gamma} \left(\cdot\right)\geq 1$ is a known smooth positive definite function.

Define the Lyapunov function $U_1(\bar{Z}, \epsilon_1)=V_{1}\big( \bar{Z}  \big)+ \epsilon_1^2$. Then, the time derivative of $U_1\equiv U_1(\bar{Z}, \epsilon_1)$ along the trajectory of $\epsilon_1$-subsystem with $\hat{x}_2=\epsilon_2+\alpha_1$ and $\eta=\bar{\eta}+\bm{\eta}^{\star}+Nb^{-1}\epsilon_1 $ leads to
\begin{align}\label{U1-derivative}
\dot{U}_1(\bar{Z}, \epsilon_1)=&\ \dot{V}_{1}\big( \bar{Z}  \big)+ 2\epsilon_1\dot{\epsilon}_1\nonumber\\
\leq &-\Delta_{Z}(\bar{Z} )\big\|\bar{Z} \big\|^2+ \hat{\gamma}^* \hat{\gamma} \left(\epsilon_1\right)\epsilon_1^2+2\epsilon_1\bar{g}_1(\bar{z}, \epsilon_1, \mu) \nonumber\\
&+2b\epsilon_1(\underbrace{\epsilon_2+\alpha_1( \epsilon_1,k^*,\eta)}_{\hat{x}_2}-\chi(\bm{\eta}^*))+2b\epsilon_1\bar{x}_2\nonumber\\
\leq &-\Delta_{Z}(\bar{Z} )\big\|\bar{Z} \big\|^2+ \hat{\gamma}^* \hat{\gamma} \left(\epsilon_1\right)\epsilon_1^2+2\epsilon_1\bar{g}_1(\bar{z}, \epsilon_1, \mu)\nonumber\\
&+2b\epsilon_1(\epsilon_2+\underbrace{\alpha_1( \epsilon_1,k^*,\eta)}_{-k^*\rho(\epsilon_1)\epsilon_1+\chi(\eta)}-\chi(\bm{\eta}^*))+2b\epsilon_1\bar{x}_2 \nonumber\\
\leq &-\Delta_{Z}(\bar{Z} )\big\|\bar{Z} \big\|^2-\big(2b k^*\rho(\epsilon_1)-\hat{\gamma}^* \hat{\gamma} \left(\epsilon_1\right)\big)\epsilon_1^2\nonumber\\
&+2b\epsilon_1(\epsilon_2-\bar{\chi}(\bar{\eta},\epsilon_1,\mu))\nonumber\\
&+2b\epsilon_1\bar{x}_2 +2\epsilon_1\bar{g}_1(\bar{z}, \epsilon_1, \mu)\nonumber\\
\leq &-\Delta_{Z}(\bar{Z} )\big\|\bar{Z} \big\|^2-\big(2b k^*\rho(\epsilon_1)-\hat{\gamma}^* \hat{\gamma} \left(\epsilon_1\right)\big)\epsilon_1^2\nonumber\\
&+2b\epsilon_1\epsilon_2-2b\epsilon_1\bar{\chi}(\bar{\eta},\epsilon_1,\mu)\nonumber\\
&+2b\epsilon_1\bar{x}_2 +2\epsilon_1\bar{g}_1(\bar{z}, \epsilon_1, \mu)\nonumber\\
\leq &-\Delta_{Z}(\bar{Z} )\big\|\bar{Z} \big\|^2-\big(2b k^*\rho(\epsilon_1)-3-\hat{\gamma}^* \hat{\gamma} \left(\epsilon_1\right)\big)\epsilon_1^2\nonumber\\
&+2b\epsilon_1\epsilon_2+\Delta_1( \epsilon_1,\bar{Z},\mu)
\end{align}
where 
\begin{align*}
    \Delta_1( \epsilon_1,\bar{Z},\mu)&=b^2\bar{x}_2^2 +\bar{g}_1(\bar{z}, \epsilon_1, \mu)^2+b^2\bar{\chi}(\bar{\eta}, \epsilon_1, \mu)^2,\\
    \bar{\chi}(\bar{\eta}, \epsilon_1, \mu) &\equiv \chi(\bar{\eta}+\bm{\eta}^*+Nb^{-1} \epsilon_1)-\chi(\bm{\eta}^*).
\end{align*}
Now let $ U_2(\bar{Z}, \epsilon_1, \epsilon_2)=U_1(\bar{Z}, \epsilon_1)+\epsilon_2^2$. The time derivative of $U_2\equiv U_2(\bar{Z}, \epsilon_1, \epsilon_2)$ along the trajectory of $\epsilon_2$-subsystem with $\hat{x}_3=\epsilon_{3}+\alpha_2$ is given by 
\begin{align}
\dot{U}_2 \leq &\ \dot{U}_1+2 \epsilon_2\dot{\epsilon}_2\nonumber\\
\leq&-\Delta_{Z}(\bar{Z} )\big\|\bar{Z} \big\|^2-\big(b k^*\rho(\epsilon_1)-3-\hat{\gamma}^* \hat{\gamma} \left(\epsilon_1\right)\big)\epsilon_1^2\nonumber\\
&+2b\epsilon_1\epsilon_2+\Delta_1( \epsilon_1,\bar{Z},\eta)+2 \epsilon_2(\epsilon_{3}+\alpha_2-\lambda_2\hat{x}_1-\dot{\alpha}_1)\nonumber\\
\leq&-\Delta_{Z}(\bar{Z} )\big\|\bar{Z} \big\|^2-\big(b k^*\rho(\epsilon_1)-3-\hat{\gamma}^* \hat{\gamma} \left(\epsilon_1\right)\big)\epsilon_1^2\nonumber\\
&+2b\epsilon_1\epsilon_2+\Delta_1( \epsilon_1,\bar{Z},\eta)\nonumber\\
&+2 \epsilon_2\Big(\epsilon_{3}+\alpha_2-\lambda_2\hat{x}_1-\frac{\partial \alpha_1}{\partial \epsilon_1}\dot{\epsilon}_1-\frac{\partial \alpha_1 }{\partial \eta}\dot{\eta}-\frac{\partial \alpha_1 }{\partial k^*}\dot{k}^*\Big)\nonumber\\
\leq&-\Delta_{Z}(\bar{Z} )\big\|\bar{Z} \big\|^2-\big(b k^*\rho(\epsilon_1)-3-\hat{\gamma}^* \hat{\gamma} \left(\epsilon_1\right)\big)\epsilon_1^2\nonumber\\
&+2b\epsilon_1\epsilon_2+\Delta_1( \epsilon_1,\bar{Z},\eta)+2 \epsilon_2\epsilon_{3}\nonumber\\
&+2 \epsilon_2\Big(\alpha_2-\lambda_2\hat{x}_1-\frac{\partial \alpha_1 }{\partial \eta}\dot{\eta}-b\frac{\partial \alpha_1}{\partial \epsilon_1}(\underbrace{\epsilon_2+\alpha_1}_{\hat{x}_2}-\chi(\eta))\nonumber\\
&-\frac{\partial \alpha_1}{\partial \epsilon_1}\big[b\bar{\chi}(\bar{\eta},\epsilon_1,\mu)+b\bar{x}_2 +\bar{g}_1(\bar{z}, \epsilon_1, \mu)\big]-\frac{\partial \alpha_1 }{\partial k^*}\dot{k}^*\Big)\nonumber\\
\leq&-\Delta_{Z}(\bar{Z} )\big\|\bar{Z} \big\|^2-\big(b k^*\rho(\epsilon_1)-3-\hat{\gamma}^* \hat{\gamma} \left(\epsilon_1\right)\big)\epsilon_1^2\nonumber\\
&+\Delta_1( \epsilon_1,\bar{Z},\eta)+2 \epsilon_2\epsilon_{3}\nonumber\\
&+2 \epsilon_2\Big(\alpha_2-\epsilon_2+ \underbracea{\epsilon_2+b\epsilon_1-\lambda_2\hat{x}_1-\frac{\partial \alpha_1 }{\partial \eta}\dot{\eta}}\nonumber\\
&\underbraceb{{}-b\frac{\partial \alpha_1}{\partial \epsilon_1}(\epsilon_2\underbrace{-k^*\rho(\epsilon_1)\epsilon_1+\chi(\eta)}_{\alpha_1}-\chi(\eta))-\frac{\partial \alpha_1 }{\partial k^*}\dot{k}^*{}}_{-\alpha_2 }\nonumber\\
&\underbraced{{}+\frac{1}{2}\epsilon_2\Big(\frac{\partial \alpha_1}{\partial \epsilon_1}\Big)^2}\Big)+\epsilon^2_2\nonumber\\
&+\underbrace{\big[b^2\bar{\chi}(\bar{\eta},\epsilon_1,\mu)^2+b^2\bar{x}_2^2 +\bar{g}_1(\bar{z}, \epsilon_1, \mu)^2\big]}_{\Delta_1( \epsilon_1,\bar{Z},\eta)}\nonumber\\
\leq&-\Delta_{Z}(\bar{Z} )\big\|\bar{Z} \big\|^2+2\Delta_1( \epsilon_1,\bar{Z},\mu)+2 \epsilon_2\epsilon_{3}\nonumber\\
&-\big(2b k^*\rho(\epsilon_1)-3-\hat{\gamma}^* \hat{\gamma} \left(\epsilon_1\right)\big)\epsilon_1^2-\epsilon_2^2.\nonumber
\end{align}
Now let $U_i(\bar{Z}, \epsilon_1, \dots,\epsilon_i)=U_{i-1}(\bar{Z}, \epsilon_1,\dots,\epsilon_{i-1})+\epsilon_i^2$.  The time derivative of $U_i\equiv U_i(\bar{Z}, \epsilon_1,\dots,e_{i})$ along the trajectory of $\epsilon_i$-subsystem with $\hat{x}_{i+1}=\epsilon_{i+1}+\alpha_i$ is given by 
\begin{align}
\dot{U}_i\leq&-\Delta_{Z}(\bar{Z} )\big\|\bar{Z} \big\|^2+i\Delta_1( \epsilon_1,\bar{Z},\mu)+2 \epsilon_i\epsilon_{i+1}\nonumber\\
&-\big(2b k^*\rho(\epsilon_1)-3-\hat{\gamma}^* \hat{\gamma} \left(\epsilon_1\right)\big)\epsilon_1^2-\sum\nolimits_{j=2}^{i}\epsilon_{j}^2.\nonumber
\end{align}
Finally, at $i=r$ results in
\begin{align}\label{U1-derivative-r}
U_r(\bar{Z}, \epsilon_1, \dots,\epsilon_r)=V_{1}\big( \bar{Z}  \big) +\sum\nolimits_{i=1}^{r}\epsilon_i^2.
\end{align}
The time derivative of $U_r\equiv U_r(\bar{Z}, \epsilon_1, \dots,\epsilon_r)$ along the trajectory of systems \eqref{alpha-function-s} and \eqref{bara-deriv} with ${e}_{r+1}=0$ is given by 
\begin{align}\label{U1-derivative-sr}
\dot{U}_r\leq&-\Delta_{Z}(\bar{Z} )\big\|\bar{Z} \big\|^2+r\Delta_{a}( \epsilon_1,\bar{Z},\bar{a},\mu)\nonumber\\
&-\big(2b k^*\rho(\epsilon_1)-3-\hat{\gamma}^* \hat{\gamma} \left(\epsilon_1\right)\big)\epsilon_1^2-\sum\nolimits_{j=2}^{r}\epsilon_{j}^2,
\end{align} 
where $$\Delta_a( \epsilon_1,\bar{Z},\bar{a},\mu)=b^2\bar{x}_2^2 +\bar{g}_1^2(\bar{z}, \epsilon_1, \mu)+b^2\bar{\chi}_s^2(\bar{\eta}_e,\bar{a},\mu).$$
From Properties \ref{property2} and \ref{property4}, there are positive smooth functions $\gamma_{g0}(\cdot)$ and $\gamma_{g1}(\cdot)$, positive constants $\phi_0$, $\phi_1$, and $\phi_2$ such that
\begin{align*}\Delta_a( \epsilon_1,\bar{Z},\bar{a},\mu)\leq & \ (\gamma_{g0}(\bar{Z})+\phi_1)\|\bar{Z}\|^2\nonumber\\
&+\epsilon_1^2(\gamma_{g1}(\epsilon_1)+\phi_0)+\phi_{2}\alpha_{\bar{a}}(V_{\bar{a}}). 
\end{align*} 
Now, define a Lyapunov function by
 \begin{align}\label{U-zer-a}
 U(\bar{Z}, \epsilon_1, \dots,\epsilon_r, \bar{a})=U_r(\bar{Z}, \epsilon_1, \dots,\epsilon_r)+\phi_{\bar{a}} V_{\bar{a}}(\bar{a}),
\end{align}
 where the positive constant $\phi_{\bar{a}}$ is to be specified, and the integral Input-to-State Stable Lyapunov functions $V_{\bar{a}}(\bar{a})$ is given in Property \ref{property3} of Lemma \ref{lemmabodev}. 
The time derivative of $U\equiv U(\bar{Z}, \epsilon_1, \dots,\epsilon_r, \bar{a})$ along the trajectories of \eqref{bara-deriv} with the control input \eqref{nonpara-control} satisfies
\begin{align}\label{Uderivative}
\dot{U}=&\ \dot{U}_r+\phi_{\bar{a}} \dot{V}_{\bar{a}}(\bar{a})\nonumber\\
\leq & -\big(\Delta_{Z}(\bar{Z} )-r\gamma_{g0}(\bar{Z})-r\phi_1-\phi_{\bar{a}}c_{ae}\big)\big\|\bar{Z} \big\|^2\nonumber\\
&-\big(2b k^*\rho(\epsilon_1)-3-\hat{\gamma}^* \hat{\gamma} \left(\epsilon_1\right)-r\gamma_{g1}(\epsilon_1)-r\phi_0\nonumber\\
&-\phi_{\bar{a}}c_{ae}\big)\epsilon_1^2-\sum\nolimits_{j=2}^{r}\epsilon_{j}^2-(\phi_{\bar{a}}-r\phi_{2})\alpha_{\bar{a}}(V_{\bar{a}}).
   \end{align}
Let the parameter and the smooth functions be
\begin{align} \label{rho-inquality-2}
\phi_{\bar{a}}& \geq r\phi_{2}+1,\notag\\
\Delta_{Z}(\bar{Z} ) &\geq r\gamma_{g0}(\bar{Z})+r\phi_1+\phi_{\bar{a}}c_{ae}+1,\notag\\
\rho(\epsilon_1)
&\geq \max\{\gamma_{g1}(\epsilon_1), \hat{\gamma} \left(\epsilon_1\right), 1\},\notag\\
 k^* &\geq  {(3+ \hat{\gamma}^*+ r+ r\phi_0+\phi_{\bar{a}}c_{ae}+1)}/({2b}).
\end{align}
Hence, \eqref{Uderivative} yields the inequality
\begin{align}\label{strlyafun}\dot{U}\leq -\big\|\bar{Z}\big\|^2-\sum\nolimits_{j=1}^{r}\epsilon_{j}^2-\alpha_{\bar{a}}(V_{\bar{a}}).\end{align}
Finally, because $U(\bar{Z}, \epsilon_1, \dots,\epsilon_r, \bar{a})$ is positive definite and radially unbounded and satisfies a strict Lyapunov function satisfying inequality \eqref{strlyafun}, it follows that the closed-loop system is uniformly asymptotically stable for all $\col(v,w,\sigma)\in \mathds{V}\times \mathds{W}\times \mathds{S}$.
 This completes the proof.
\end{proof}

From Theorem \ref{Thm-Lemma-nonpara}, we can also use the adaptive method to estimate the gain $k^*$. As a result, Theorem \ref{Thm-Lemma-nonpara} can admit the following corollary.
\begin{Corollary}\label{Theorem-4}%
For system \eqref{Main-sys-com} under Assumptions \ref{ass0}--\ref{ass5-explicit}, there is a sufficiently large enough positive smooth function $\rho(\cdot)$ that the controller, 
\begin{subequations}\label{adnon}\begin{align}
u &= \alpha_{s,r}(\epsilon_1,\epsilon_2,\dots,\epsilon_r,\hat{k},\eta,\hat{a}), \label{adnon-1b}\\
\dot{\hat{k}}&=\rho(\epsilon_1)\epsilon_1^2,
\end{align}\end{subequations}
solves Problem \ref{Prob: second-order-Output-regulation} with the functions $\alpha_{s,1}( \epsilon_1,\hat{k},\eta,\hat{a})$, $\alpha_{s,2}(\epsilon_1,\epsilon_2,\hat{k},\eta,\hat{x}_1,\hat{a})$,  and $\alpha_{s,i}(\epsilon_1,\dots,\epsilon_i,\hat{k},\eta,\hat{x}_1,\hat{a})$ defined in \eqref{alpha-function-s}, for $i=3,\dots,r$.
\end{Corollary}%
\begin{proof} Define the Lyapunov function
\begin{align}
V_{r}(\bar{Z}, \epsilon_1,\dots, \epsilon_r, \bar{a},\tilde{k})=\underbrace{\overbrace{V_{1}\big( \bar{Z}  \big) +\sum\nolimits_{i=1}^{r}\epsilon_i^2}^{U_r(\bar{Z}, \epsilon_1, \dots,\epsilon_r)}+\phi_{\bar{a}} V_{\bar{a}}(\bar{a})}_{U(\bar{Z}, \epsilon_1, \dots,\epsilon_r,\bar{a})}+b\tilde{k}^2
\end{align}
for the system \eqref{bara-deriv}, where $\tilde{k}=\hat{k}-k^*$ and $U_r(\bar{Z}, \epsilon_1, \dots,\epsilon_r)$ are the same as in \eqref{U1-derivative-r} and $U(\bar{Z}, \epsilon_1, \dots,\epsilon_r,\bar{a})$ is the same as that in \eqref{U-zer-a}. It can be verified that $V_{r}(\bar{Z}, \epsilon_1,\dots, \epsilon_r, \bar{a},\tilde{k})$ is globally positive definite and radially unbounded. Then, the time derivative of $V_{r}(\bar{Z}, \epsilon_1,\dots, \epsilon_r, \bar{a}, \tilde{k})$ along the trajectories of \eqref{bara-deriv} with the control input \eqref{adnon} satisfies
\begin{align}\label{Vrderivative}
\dot{V}_{r}=&\ \underbrace{\dot{U}_r+\phi_{\bar{a}} \dot{V}_{\bar{a}}(\bar{a})}_{\dot{U}} +2b\tilde{k}\dot{\tilde{k}}\nonumber\\
\leq & -\big(\Delta_{Z}(\bar{Z} )-r\gamma_{g0}(\bar{Z})-r\phi_1-\phi_{\bar{a}}c_{ae}\big)\big\|\bar{Z} \big\|^2\nonumber\\
&-\big(2b (\underbrace{\hat{k}-k^*}_{\tilde{k}}+k^*)\rho(\epsilon_1)-3-\hat{\gamma}^* \hat{\gamma} \left(\epsilon_1\right)\nonumber\\
&-r\gamma_{g1}(\epsilon_1)-r\phi_0-\phi_{\bar{a}}c_{ae}\big)\epsilon_1^2-\sum\nolimits_{j=2}^{r}\epsilon_{j}^2\nonumber\\
&-(\phi_{\bar{a}}-r\phi_{2})\alpha_{\bar{a}}(V_{\bar{a}})+2b\tilde{k}\rho(\epsilon_1)\epsilon_1^2
   \end{align}
Let the parameter and the smooth functions be
\begin{align} \label{rho-inquality-Vr}
\phi_{\bar{a}}& \geq r\phi_{2}+1,\notag\\
\Delta_{Z}(\bar{Z} ) &\geq r\gamma_{g0}(\bar{Z})+r\phi_1+\phi_{\bar{a}}c_{ae}+1,\notag\\
\rho(\epsilon_1)
&\geq \max\{\gamma_{g1}(\epsilon_1), \hat{\gamma} \left(\epsilon_1\right), 1\},\notag\\
 k^* &\geq  {(3+ \hat{\gamma}^*+ r+ r\phi_0+\phi_{\bar{a}}c_{ae}+1)}/({2b}).
\end{align}
Hence, \eqref{Vrderivative} yields the inequality
\begin{align}\label{V-strlyafun}
    \dot{V}_{r}\leq -\big\|\bar{Z}\big\|^2 - \sum_{j=1}^{r}\epsilon_{j}^2 - \alpha_{\bar{a}}(V_{\bar{a}}).
\end{align}
Since $\alpha_{\bar{a}}(\cdot)$ is a class $\mathcal{K}$ function, the term on the right-hand side is negative semi-definite. By invoking the LaSalle–Yoshizawa Theorem \cite[Lemma 1]{krstic1995nonlinear}, the boundedness of $V_r(t)$ implies that the signals $\bar{Z}$, $\epsilon_1$, $\dots$, $\epsilon_r$, $\bar{a}$, and $\tilde{k}$ are globally uniformly bounded. Furthermore, it follows that
\begin{equation}
    \lim_{t \to \infty} \Bigg( \big\|\bar{Z}(t)\big\|^2 + \sum_{j=1}^{r}\epsilon_{j}^2(t) + \alpha_{\bar{a}}(V_{\bar{a}}(t)) \Bigg) = 0,
\end{equation}
which implies $\lim_{t \to \infty} \|\bar{Z}(t)\| = 0$, $\lim_{t \to \infty} \epsilon_j(t) = 0$ for $j=1,\dots,r$, and $\lim_{t \to \infty} \alpha_{\bar{a}}(V_{\bar{a}}(t)) = 0$.
Therefore, the controller \eqref{adnon} solves Problem \ref{Prob: second-order-Output-regulation}.
\end{proof}
\section{Practical Examples}\label{numerexam}

 \subsection{Regulation of a Virtual Synchronous Generator (VSG)}
\begin{figure}[htbp]
\centering
\begin{tikzpicture}[scale=0.9, transform shape, >=stealth]
    \draw[thick, fill=mitSilverGray!10] (0,0) rectangle (3.5, 2.5);
    \node[align=center] at (1.75, 2.2) {\textbf{Grid-Forming Inverter}};
    
    \draw[dashed, fill=white] (0.5, 0.5) rectangle (3.0, 1.8);
    \node[align=center, font=\scriptsize] at (1.75, 1.5) {\textbf{Virtual Mechanics}};
    
    \draw[thick, fill=mitBlue!20] (1.75, 1.0) circle (0.4);
    \draw[thick, ->] (1.5, 0.8) arc (220:320:0.35);
    \node[font=\tiny] at (1.75, 1.0) {$J, D$};
    
    \draw[<-, thick] (0, 1.0) -- (-2.25, 1.0) node[above,xshift=1cm] {$u$ (Power Ref)};
    
    \draw[thick] (3.5, 1.25) -- (4.5, 1.25);
    \draw[decoration={aspect=0.3, segment length=2mm, amplitude=2mm,coil},decorate, thick, mitBlue] (4.5, 1.25) -- (6.0, 1.25);
    \node[above, mitBlue] at (5.25, 1.45) {$L_{filter}$};
    
    \draw[thick] (6.0, 1.25) -- (7.0, 1.25);
    \draw[thick] (7.0, 1.25) -- (7.0, 0.5);
    \draw[thick] (7.0, 0.5) circle (0.3);
    \node at (7.0, 0.5) {$\sim$}; 
    \node[below] at (7.0, 0.1) {Grid $V_g$};
    
    \draw[->, dashed, thick] (6.0, 1.25) -- (6.0, 0.2) -- (3.0, 0.2) -- (3.0, 0.5);
    \node[below, font=\scriptsize] at (4.5, 0.2) {Feedback $\delta, P_{out}$};

    \node[draw, fill=mitPink!30, rounded corners, font=\scriptsize] at (6.5, 2.5) (dist) {Grid Faults $d(t)$};
    \draw[->, thick, snake=snake, segment length=3pt] (dist.south) -- (6.5, 1.3);
\end{tikzpicture}

\vspace{-.2cm}

\caption{Schematic of a virtual synchronous generator. The inverter control emulates the dynamics of a rotating machine ($r=2$) to synchronize with the grid.}
\label{fig:vsg}
\end{figure}
\begin{figure}[htbp]
\centering
\epsfig{figure=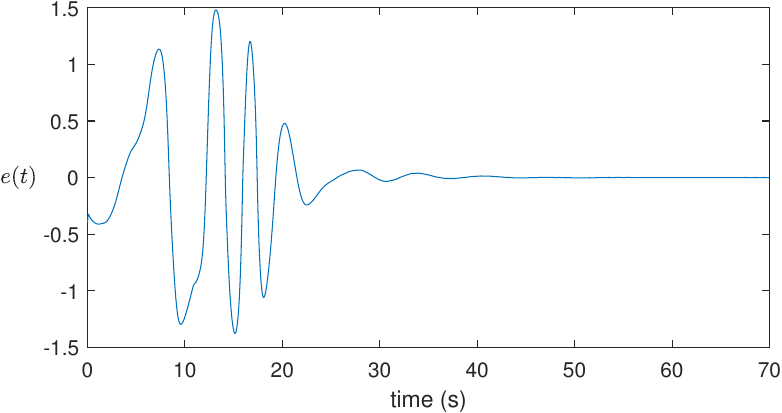,width=0.5\textwidth}
\caption{Tracking performance for the virtual synchronous generator ($r=2$).}
\label{fig:vsg_track}
\quad
\epsfig{figure=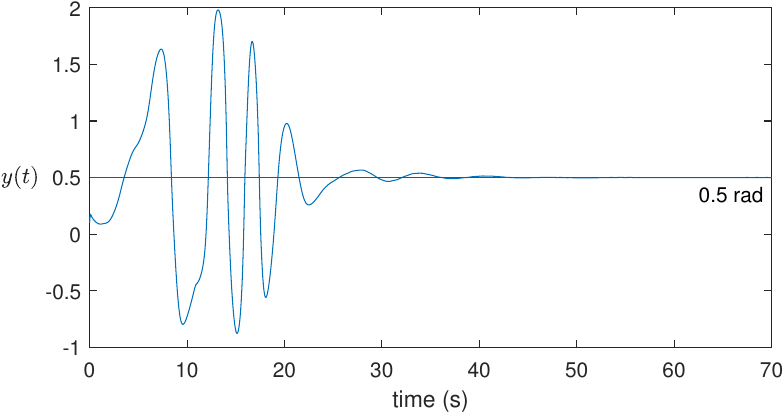,width=0.5\textwidth}
\caption{Desired power angle and trajectory of power angle $x_1(t)$.}
\label{fig:Desired_power_angle}

\quad
\epsfig{figure=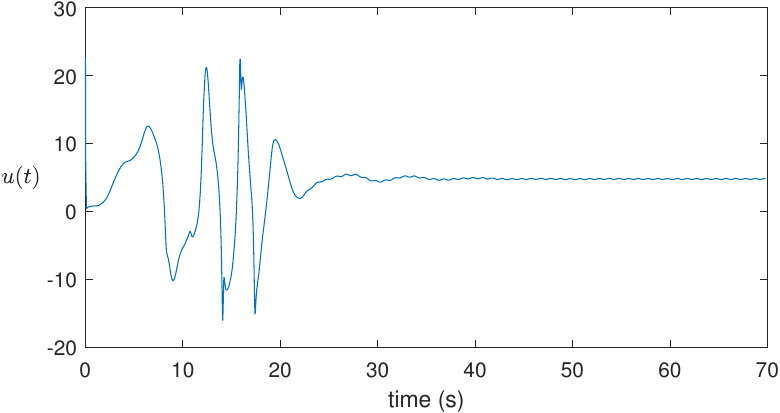,width=0.5\textwidth}
\caption{Control input $u(t)$ for the virtual synchronous generator.}
\label{fig:Control——Input}
\quad
\epsfig{figure=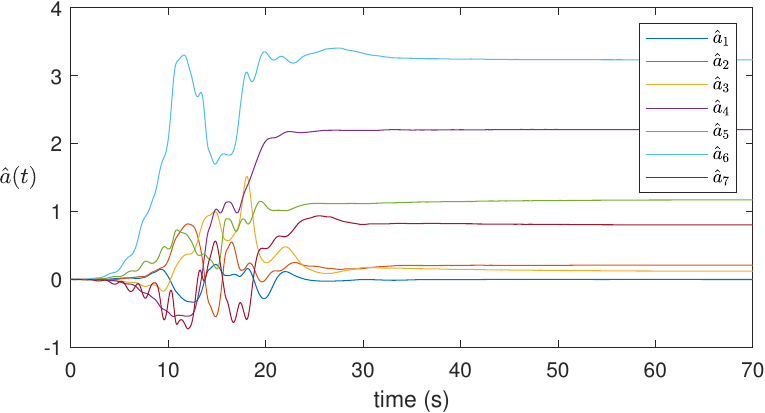,width=0.5\textwidth}
\caption{Parameter estimates for the Virtual Synchronous Generator.}
\label{fig:Parameter_Estimation}
\end{figure}
\begin{figure}[htbp]
\centering
\epsfig{figure=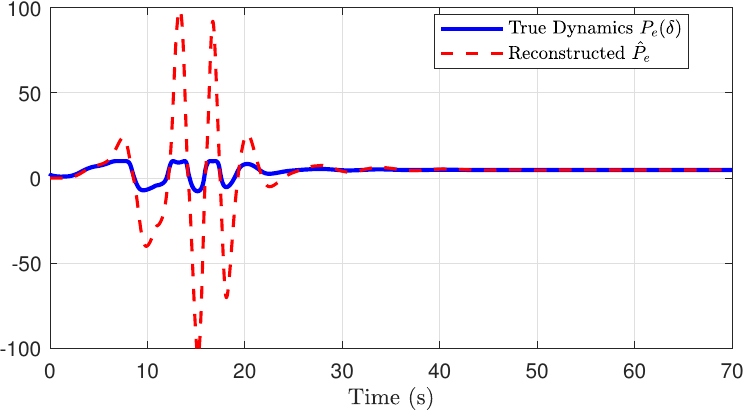,width=0.5\textwidth}
\caption{Visualization of the nonlinearity reconstruction in the time domain.}
\label{fig:Dynamics_Reconstruction_Inverter}
\end{figure}

To demonstrate the application of the proposed framework in modern power systems, consider a grid-connected inverter controlled as a virtual synchronous generator (VSG). The VSG control strategy forces the power electronics interface to emulate the electromechanical dynamics of a traditional synchronous machine, thereby providing inertia and damping support to the grid.

The core dynamics are governed by the virtual swing equation \cite{zhong2010synchronverters}:
\begin{subequations}\label{vsg-dynamics}
\begin{align}
    \dot{x}_1 &= x_2, \\
    \dot{x}_2 &= \frac{1}{J}(u - D x_2 - P_e(x_1) + d(t)), \\
    y &= x_1, \quad e = y - y_r,
\end{align}
\end{subequations}
where $x_1 = \delta$ is the power angle (phase difference between inverter and grid),  $x_2 = \dot{\delta}$ is the frequency deviation from the nominal grid frequency, the control input $u$ represents the mechanical power reference ($P_\textrm{ref}$), and $J$ and $D$ are the virtual inertia and virtual damping coefficients, respectively, which are programmable parameters in the inverter's microcontroller. 

The nonlinearity arises from the electrical power transfer equation $P_e(x_1)$. For a connection to an infinite grid with voltage $V_g$ through a line impedance $X$, this relationship is sinusoidal:
\begin{align*}
    P_e(x_1) = \frac{E_g V_g}{X} \sin(x_1 + \phi),
\end{align*}
where $E_g$ is the inverter output voltage magnitude. The parameters $V_g$, $X$, and $\phi$ are determined by the physical grid conditions and are typically unknown and time-varying (represented by the disturbance $d(t)$).

Differentiating the output $y=x_1$ twice yields
\begin{align*}
    \dot{y} &= x_2, \\
    \ddot{y} &= \frac{1}{J}\left( u - D x_2 - \frac{E_g V_g}{X}\sin(x_1+\phi) + d(t) \right).
\end{align*}
The control input $u$ appears in the second derivative, indicating a relative degree of $r=2$. The control objective is to maintain grid synchronization (regulate $x_1$ to a desired load angle $y_r$ or track a frequency reference) despite fluctuations in grid voltage $V_g$ and impedance $X$. The term $\sin(x_1)$ represents a strong trigonometric nonlinearity that the internal model must learn to ensure stable power delivery.

In the numerical simulation, the physical and virtual parameters of the VSG system are configured as follows. The virtual inertia and damping coefficient are set to $J=0.1$ and $D=5$, respectively. 
The electrical parameters characterizing the grid interconnection are chosen as
\begin{itemize}
    \item Inverter output voltage magnitude: $E_g = 10$\,V;
    \item Nominal grid voltage magnitude: $V_g = 10$\,V;
    \item Line impedance: $X = 10\,\Omega$;
    \item Initial phase offset: $\phi = 0$.
\end{itemize}
Based on these parameters, the theoretical maximum power transfer capability is $P_{\max} = \frac{E_g V_g}{X} = 10$. 
While fixed values of $V_g$ and $X$ are used to simulate the plant dynamics, these parameters are treated as \textit{unknown} and potentially time-varying by the proposed nonparametric controller. The controller must learn the unknown sinusoidal relationship $P_e(x_1) = 10 \sin(x_1)$ online to achieve the tracking objective.

To validate the regulation capability under standard dispatch conditions, the control objective is to regulate the power angle to a constant operating point,
\begin{align*}
    y_r(t) = 0.5 \, \text{rad},
\end{align*}
which simulates a fixed power dispatch command from the transmission system operator. Although the reference is constant, the steady-state control input $u_{ss}$ required to maintain this angle is unknown a priori due to the uncertain grid impedance $X$ and disturbances. The internal model must learn the inverse of the nonlinear power transmission characteristic to generate the correct mechanical power reference.
The interaction dynamics between the grid and the VSG, as defined in \eqref{vsg-dynamics}, combined with unknown disturbances, result in an uncertain steady-state behavior. This makes deriving an explicit solution mathematically intractable, particularly since only the system output is available for feedback. To address this, assume that the steady-state input can be described by a linear internal model of the form:
\begin{align*}
\frac{d^{7}\hat{u}}{dt^{7}} + a_7 \frac{d^{6}\hat{u}}{dt^{6}} + \dots + a_2 \frac{d\hat{u}}{dt} + a_1 \hat{u} = 0,
\end{align*}
where $\bm{a} = \col(a_1, a_2, \dots, a_7)$ is a vector of unknown constant coefficients that the controller must identify or adapt to.

To rigorously test the robustness of the proposed nonparametric controller against unknown periodic disturbances, the disturbance $d(t)$ is designed as a composite signal:
\begin{align*}
    d(t) = 0.1 \sin(2\pi t).
\end{align*}
The internal model is expected to learn and compensate for this unknown frequency component ($1\,\text{Hz}$) to achieve asymptotic regulation.

For the control law \eqref{adnon}, we can choose $\rho(e)=1+e^2$ based on \eqref{rho-inquality-2} to make the inequality \eqref{Uderivative} hold 
$k_a=20$ is any positive number in \eqref{explicit-mas}, $\lambda_1=2$ and $\lambda_2=2$ are chosen to make $A$ in \eqref{input-filter} to be Hurwitz. $m_1=1$, $m_2=9.5144$, $m_3=44.7616$,
$m_4=137.7619$, $m_5=309.4184$, $m_6=535.9283$, $m_7=737.6421$, $m_8=819.2345$, $m_9=737.6421$, $m_{10}=535.9283$, $m_{11}=309.4184$, $m_{12}=137.7619$, $m_{13}=44.7616$, and $m_{14}=9.5144$ are chosen to make $M$ in \eqref{explicit-mas1} defined in \eqref{MNINter} to be Hurwitz. The simulation starts with the initial conditions 
$x(0)=\col(0.1,0)$, $\eta(0)=\textbf{0}_{14}$, $\hat{k}(0)=2$, and $\hat{a}(0)=\textbf{0}_7$.

The results in Figs.\  \ref{fig:vsg_track}--\ref{fig:Dynamics_Reconstruction_Inverter} confirm that the VSG successfully synchronizes with the grid, with the internal model reconstructing the unknown grid interaction dynamics $P_e(x_1)$. 
In Fig.\  \ref{fig:Dynamics_Reconstruction_Inverter}, the solid black line represents the true, unknown grid power transfer characteristic $P_e(\delta) = P_{\max}\sin(\delta)$. The colored scatter points represent the instantaneous estimate $\hat{P}_e(t)$ generated by the nonparametric internal model against the actual power angle $\delta(t)$. The convergence of the trajectory (from blue to red) onto the theoretical manifold demonstrates that the internal model has successfully learned the unknown sinusoidal nonlinearity profile without any a priori structural knowledge.

\subsection{Heading Control of a Surface Vessel}

\begin{figure}[htbp]
\centering
\usetikzlibrary{shapes, arrows.meta, decorations.pathmorphing, shadows, calc, positioning}


\definecolor{oceanBlue}{RGB}{25,102,255}
\definecolor{deepWater}{RGB}{10, 50, 90}
\definecolor{shipDeck}{RGB}{220, 220, 220}
\definecolor{shipHull}{RGB}{50, 50, 60}
\definecolor{dangerRed}{RGB}{117,0,20}
\definecolor{foamWhite}{RGB}{240, 250, 255}
 \begin{tikzpicture}[scale=0.8, transform shape, >=Latex]
    \shade[top color=oceanBlue!30, bottom color=deepWater!40] (-5, -3) rectangle (6, 3);
    \draw[white!20, thin, step=1] (-5, -3) grid (6, 3);

    \begin{scope}[rotate=0] 
    \fill[foamWhite, opacity=0.4] (3.8, 0) -- (-5, 2.5) -- (-3, 0) -- cycle;
    \fill[foamWhite, opacity=0.4] (3.8, 0) -- (-5, -2.5) -- (-3, 0) -- cycle;
    
    \fill[foamWhite, opacity=0.6] (-2.5, 0) -- (-5, 0.5) -- (-5, -0.5) -- cycle;
        \begin{scope}[shift={(-2.8, 0)}, rotate=20] 
            \draw[fill=dangerRed!80, thick] (0,0) -- (-0.8, 0.15) -- (-0.8, -0.15) -- cycle;
            \node[below, font=\tiny, color=dangerRed!80] at (-0.5, -0.3) {Rudder $\delta$};
             \draw[<-, thick, dangerRed] (-0.6, 0.2) -- (-0.6, 0.6) node[above, font=\tiny] {$F_{rudder}$};
        \end{scope}

        \shade[left color=shipHull!80, right color=shipHull!60, drop shadow={opacity=0.5, shadow xshift=2pt, shadow yshift=-2pt}] 
            (4,0) .. controls (2, 1.2) and (-2, 1.2) .. (-3, 1) 
            -- (-3, -1) 
            .. controls (-2, -1.2) and (2, -1.2) .. (4,0) -- cycle;
            
        \draw[fill=shipDeck] 
            (3.8,0) .. controls (1.9, 1.0) and (-1.9, 1.0) .. (-2.9, 0.9) 
            -- (-2.9, -0.9) 
            .. controls (-1.9, -1.0) and (1.9, -1.0) .. (3.8,0) -- cycle;

        \draw[fill=white!90, drop shadow] (-2.2, -0.7) rectangle (-0.8, 0.7);
        \draw[fill=mitLightBlue] (-1.8, -0.5) rectangle (-1.2, 0.5); 
        
        \draw[dashed, opacity=0.5] (-3, 0) -- (4, 0);
        
        \node[font=\bfseries\scriptsize, color=shipHull] at (1, -0.25) {Surface Vessel};

    \draw[dashed, thick, black!70] (0, 0) -- (5.0, 0);
    
    \draw[->, thick, black] (1.5, 0) arc (0:30:1.5);
    \node[right] at (1.5, 0.45) {\footnotesize\textbf{$\psi(t)$}};

    \draw[->, ultra thick, blue!80!black] (4.0, 0) node[below,xshift=0.75cm] {\footnotesize\textbf{U} (Surge Speed)}-- (5.5, 0) ;
    
    \draw[->, thick, blue!60!black] (0, 0) -- (0, -1.0) node[below] {$v$ (Sway)};
    

    \draw[->, thick, decorate, decoration={snake, amplitude=1pt, segment length=5pt}, red!60] 
        (-1.0, 2.75) -- (-1, 1) node[midway, left, font=\small] {Wind/Waves $d(t)$};
        \draw[dashed, thick, black!70] (-1.0, 0) -- (2.0, 1.8) node[above] {North / Inertial Frame};
    \end{scope}



    \node[draw=black!50, fill=white, fill opacity=0.9, rounded corners, align=left, drop shadow, anchor=north west] 
        at (1.75, -1.3) {
        \textbf{Norrbin Model ($r=2$)} \\
        $\dot{\psi} = r$ \\ 
        $T\dot{r} + r + \alpha r^3 = K\delta + d(t)$
    };
\end{tikzpicture}

\caption{Heading control of a surface vessel using the nonlinear Norrbin model. The objective is to track a desired course $\psi_r$ despite wave disturbances.}
\label{fig:ship_control}
\end{figure}

\begin{figure}[htbp]
\centering
\epsfig{figure=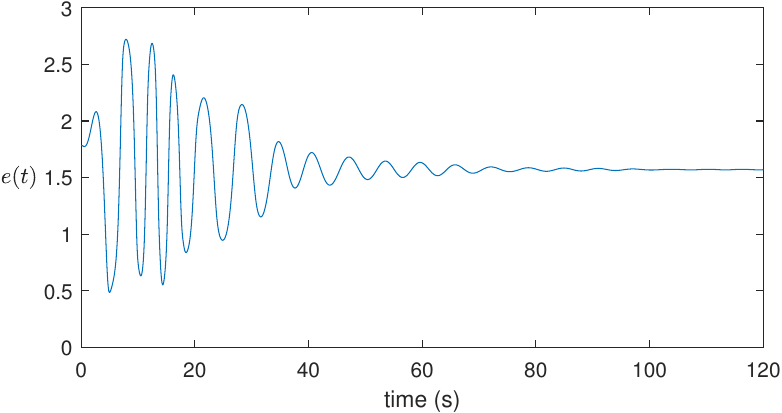,width=0.5\textwidth}
\caption{Tracking performance for the marine surface Vessel ($r=2$).}
\label{fig:vsg_track}
\quad
\epsfig{figure=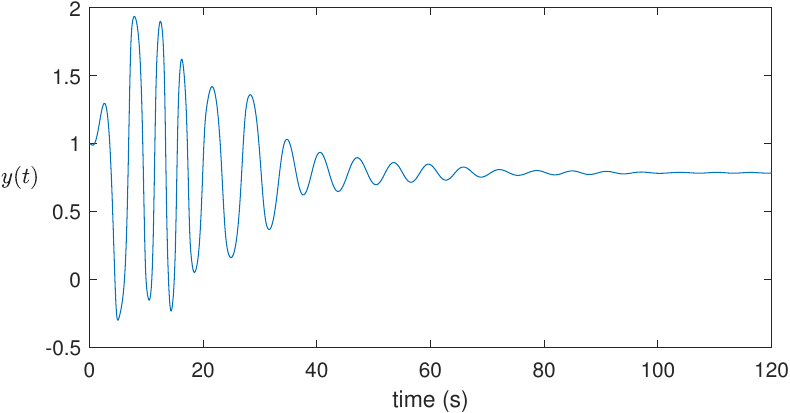,width=0.5\textwidth}
\caption{Desired angle and trajectory of power angle $x_1(t)$.}
\label{fig:Desired_power_angle}

\quad
\epsfig{figure=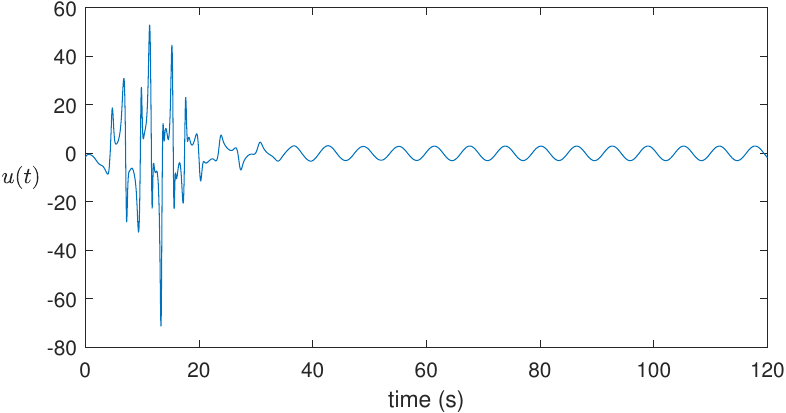,width=0.5\textwidth}
\caption{Control Input $u(t)$ for the marine surface Vessel.}
\label{fig:Control——Input}
\quad
\epsfig{figure=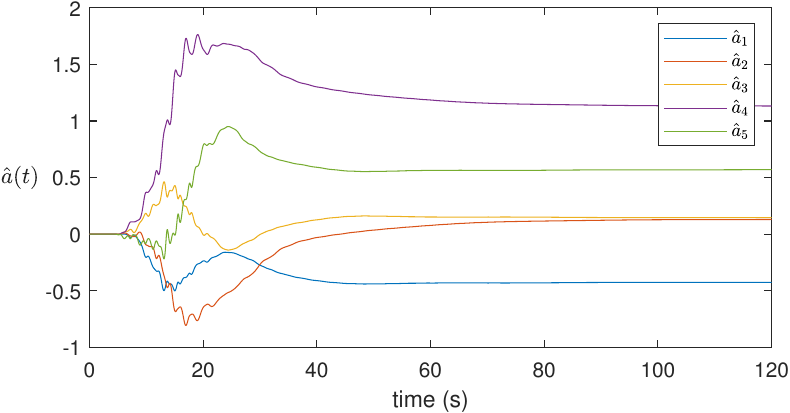,width=0.5\textwidth}
\caption{Parameter estimates for the marine surface vessel.}
\label{fig:Parameter_Estimation}
\end{figure}
\begin{figure}[htbp]
\centering
\epsfig{figure=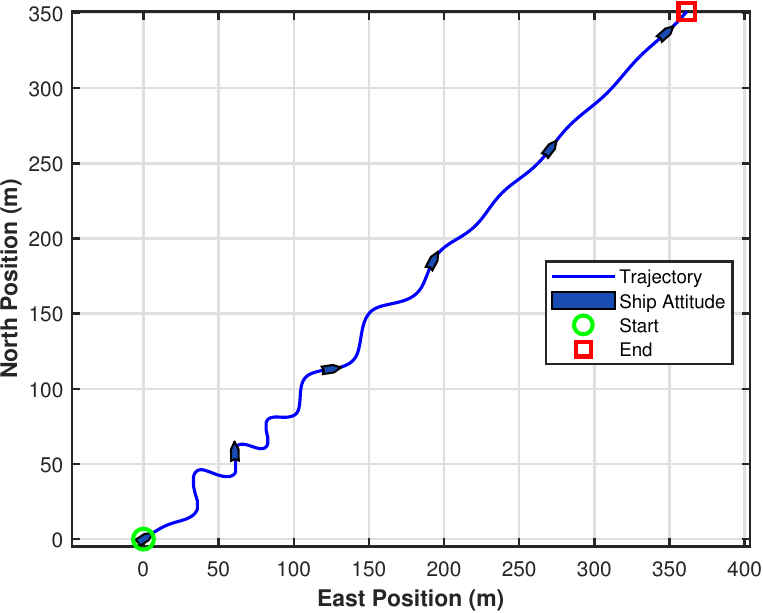,width=0.5\textwidth}
\caption{USV trajectory tracking with constant speed $U = 10$\,m/s by using $\dot{x} = U \cos (\psi)$, $
\dot{y} = U \sin (\psi)$.
}
\label{fig:Dynamics_Reconstruction_Inverter}
\end{figure}

Consider the robust course-keeping control problem for a surface vessel. The yaw dynamics are often described by the nonlinear Norrbin model \citep{norrbin1971theory,fossen2021handbook}, which captures the essential maneuvering characteristics, including the nonlinear damping effects:
\begin{subequations}\label{eq:ship_dynamics}
\begin{align}
    \dot{x}_1 &= x_2, \\
    \dot{x}_2 &= -\frac{1}{T} x_2 - \frac{\alpha}{T} x_2^3 + \frac{K}{T} u + d(t), \\
    y &= x_1, \quad e = y - y_r,
\end{align}
\end{subequations}
where $x_1 = \psi$ is the heading angle, $x_2 = r$ is the yaw rate, and $u = \delta$ is the rudder angle. $T$ and $K$ are the time constant and gain indices, respectively. The parameter $\alpha$ represents the coefficient of the nonlinear cubic damping, which is crucial for describing the vessel's turning behavior at high speeds. $d(t)$ denotes the unknown environmental disturbances induced by wind, waves, and ocean currents.

Differentiating the output $y=x_1$ twice yields
\begin{align*}
    \dot{y} &= x_2, \\
    \ddot{y} &= -\frac{1}{T} x_2 - \frac{\alpha}{T} x_2^3 + \frac{K}{T} u + d(t).
\end{align*}
The control input $u$ appears in the second derivative, confirming a relative degree of $r=2$. The nonlinearity $-\frac{\alpha}{T} x_2^3$ depends on the unmeasured state $x_2$ (assuming only heading is measured or to test the observer) and the unknown parameter $\alpha$. The disturbance $d(t)$ acts as a matched uncertainty. The control objective is to steer the vessel to a desired heading $y_r$ while rejecting the wave-induced yaw moments.

In the numerical simulation, the physical parameters of the USV are configured based on a standard prototype model:
\begin{itemize}
    \item Gain index: $K = 0.5$\,s$^{-1}$;
    \item Time constant: $T = 3$\,s;
    \item Nonlinear damping coefficient: $\alpha = 1$\,s$^2$.
\end{itemize}
While fixed values are used for the plant simulation, these parameters are treated as \textit{unknown} by the controller. The controller must learn the inverse of the nonlinear steering dynamics online.

To validate the regulation capability under dynamic sea states, the control objective is to track a 
constant heading reference:
\begin{align*}
    y_r(t) =\frac{\pi}{4}.
\end{align*}
Although the reference trajectory is defined, the steady-state control input $u_{ss}$ required to maintain this course is unknown a priori due to the parametric uncertainties and external disturbances from the ocean environment. The interaction dynamics between the vessel and the fluid environment, combined with the disturbance $d(t)$, result in an uncertain steady-state behavior. We assume the steady-state input can be described by a linear internal model of the form:
\begin{align*}
\frac{d^{5}\hat{u}}{dt^{5}} + a_5 \frac{d^{4}\hat{u}}{dt^{4}} + a_4 \frac{d^{3}\hat{u}}{dt^{3}} + a_2 \frac{d\hat{u}}{dt} + a_1 \hat{u} = 0,
\end{align*}
where $\bm{a} = \col(a_1, \dots, a_5)$ is a vector of unknown constant coefficients.

To rigorously test robustness, the disturbance $d(t)$ is designed as a sinusoidal signal simulating regular wave encounters:
\begin{align*}
    d(t) = 0.5 \sin(t).
\end{align*}
The internal model is expected to compensate for this unknown frequency component ($1\,\text{rad/s}$) to achieve asymptotic tracking.

For the control law \eqref{adnon}, we choose $\rho(e)=1+e^2$ to satisfy the inequality condition. The stabilization parameter is set to $k_a=15$ in \eqref{explicit-mas}. The filter parameters $\lambda_1=1.5$ and $\lambda_2=1.5$ are selected to ensure that the matrix $A$ in \eqref{input-filter} is Hurwitz. To construct the internal model dynamics, the parameters 
$m_1=10$, $m_2=55.11$, $m_3=151.35$, $m_4=270.59$, $m_5=346.41$, $m_6=329.72$, $m_7=234.84$, $m_8=122.69$, $m_9=44.52$ and $m_{10}=9.95$
are chosen to make the matrix $M$ in \eqref{MNINter} Hurwitz. The simulation starts with initial conditions $x(0)=\col(1,0)$, $\eta(0)=\textbf{0}_{12}$, $\hat{k}(0)=2$, and $\hat{a}(0)=\textbf{0}_6$.

The results in Figs.\  \ref{fig:ship_control}--\ref{fig:Dynamics_Reconstruction_Inverter} confirm that the USV successfully tracks the desired heading reference. Fig. \ref{fig:Dynamics_Reconstruction_Inverter} specifically illustrates the X-Y plane trajectory, where the vessel maintains a constant surge speed $U=10$\,m/s while adjusting its heading $\psi$ according to the control law $\dot{X}=U\cos(\psi), \dot{Y}=U\sin(\psi)$, demonstrating effective path following despite the presence of nonlinear damping and wave disturbances.

\subsection{Regulation of a Repulsive Magnetic Levitation System} \label{RRML}
\begin{figure}[htbp]
\centering

\vspace{-0.5cm}

\epsfig{figure=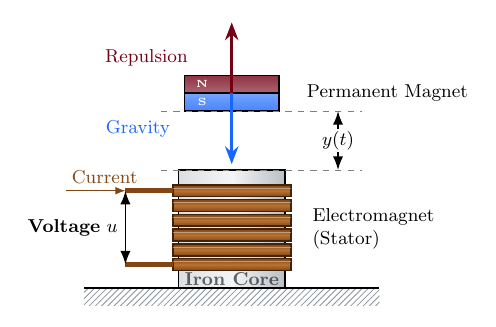,width=0.5\textwidth} 

\vspace{-0.3cm}

\caption{Repulsive magnetic levitation system.}
\end{figure}
To evaluate the proposed strategy for an unstable nonlinear system, consider a repulsive magnetic levitation system, in which a permanent magnet is levitated above an electromagnet coil. The objective is to regulate the levitated height $y$ to a desired trajectory $y_r$. The dynamics are governed by the balance between the electromagnetic repulsion force and gravity, modelled as
\begin{subequations}\label{maglev-dynamics}
\begin{align}
    \dot{x}_1 &= x_2,& \\
    \dot{x}_2 &= \frac{C}{m_g} \frac{x_3}{x_1^2} - g, &\\
    \dot{x}_3 &= -\frac{R}{L} x_3 - \frac{K_b}{L} x_2 + \frac{1}{L} u + d(t),& \\
     y  &=x_1, & e  = y - y_r,
\end{align}
\end{subequations}
where $x_1$ denotes the vertical distance (air gap) between the magnet and the coil, $x_2$ is the vertical velocity, and $x_3$ is the current in the coil. The control input $u$ is the voltage applied to the coil. The system parameters include the mass of the floating magnet $m_g$, gravity $g$, electromagnetic force constant $C$, coil resistance $R$, inductance $L$, and the back-electromotive force coefficient $K_b$. An external disturbance $d(t)$ represents voltage fluctuations or unmodelled electrical dynamics.

Unlike the attractive suspension model, the repulsive magnetic force is modelled as $F_m = C x_3 / x_1^2$, assuming the current $x_3$ is positive. Differentiating the output $y=x_1$ successively until the input appears gives the input-output dynamics:
\begin{equation}\label{eq:maglev_standard}
    y^{(3)} = f(x,t) + \beta(x_1) u.
\end{equation}
The system has a relative degree of $r=3$. The state-dependent high-frequency gain is derived as $\beta(x_1) = \frac{C}{m_g L x_1^2}$, and the nonlinear drift dynamics $f(x,t)$ is given by
\begin{equation}
    f(x,t) = \frac{C}{m_g x_1^2} \left( -\frac{R}{L} x_3 - \frac{K_b}{L} x_2 - \frac{2 x_2 x_3}{x_1} + d(t) \right).
\end{equation}
Both $f(x,t)$ and $\beta(x_1)$ contain strong nonlinearities coupled with the unmeasured states $x_2$ and $x_3$. Specifically, the term $-2 x_2 x_3 / x_1$ introduces a geometric nonlinearity related to the change of magnetic force with position.
\begin{figure}[htbp]
\centering
  \epsfig{figure=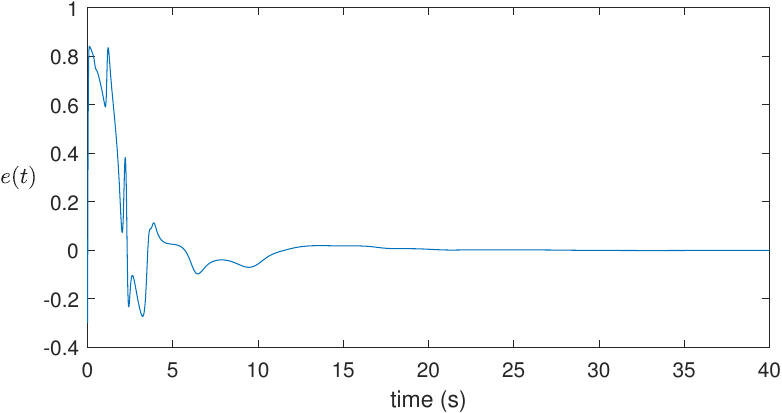,width=0.5\textwidth} 
 \caption{Tracking performance for the repulsive magnetic levitation system ($r=3$ and $e=y(t)-0.5$).}
\label{fig:maglev-tracking}
\quad
\centering
   \epsfig{figure=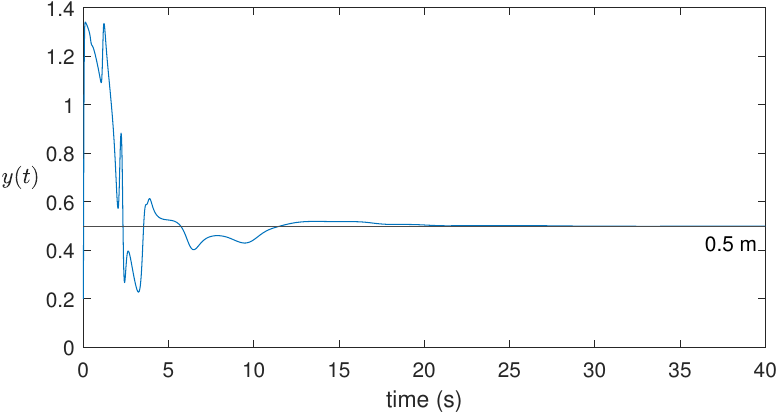,width=0.5\textwidth} 
\caption{Desired vertical distance (air gap) between the
magnet and the coil.}
\label{fig:maglev-error}
\quad
\centering
   \epsfig{figure=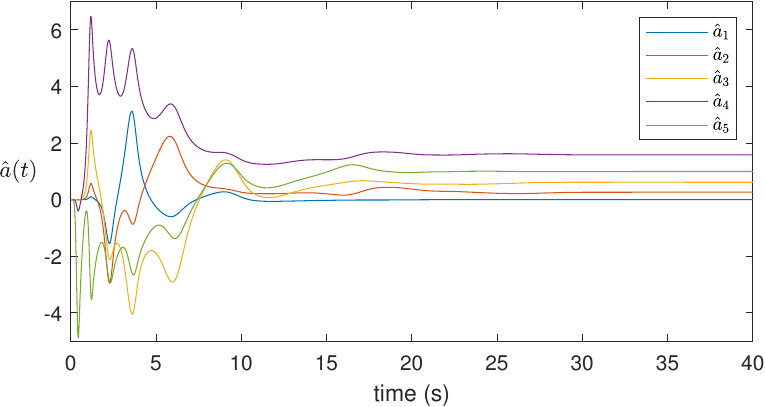,width=0.5\textwidth} 
 \caption{Parameter estimates for the repulsive magnetic levitation system.}
\label{fig:maglev-parameter}

\quad
\centering
   \epsfig{figure=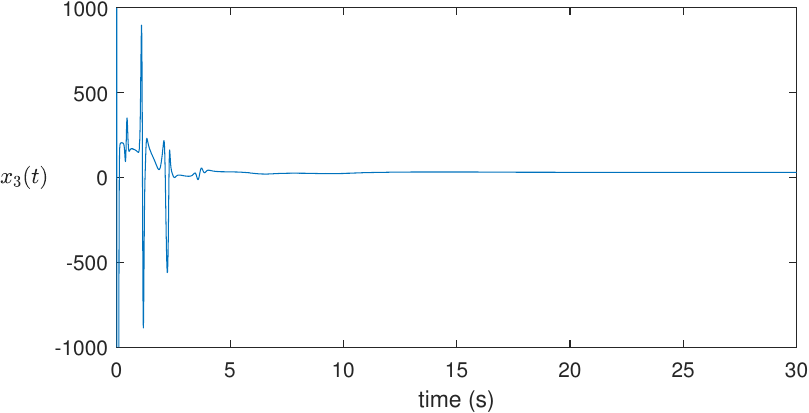,width=0.5\textwidth} 
 \caption{Voltage for the repulsive magnetic levitation system.}
\label{fig:maglev-voltage}
\end{figure}
Since only the position output $y$ is available, the proposed input-driven filter is used to recover the state information. The controller aims to maintain the magnet at a stable levitation height despite the inherent instability of the open-loop dynamics. The complex dynamics of the repulsive magnetic levitation system \eqref{maglev-dynamics} and the unknown exosystem \eqref{eqn: exosystem system} result in an unknown steady-state behavior, making it challenging and impossible to derive an explicit solution, especially considering that only the output is available. Therefore, assume that the system 
\begin{align*}
\frac{d^{5}\bm{\hat{u}}}{dt^{5}}+a_1 \bm{\hat{u}}+a_2 \frac{d\bm{\hat{u}}}{dt}+a_3 \frac{d^{2}\bm{\hat{u}}}{dt^{2}}+a_4\frac{d^{3}\bm{\hat{u}}}{dt^{3}}+a_5\frac{d^{4}\bm{\hat{u}}}{dt^{4}}=0,
\end{align*}
can describe the steady-state input, where $a=\col(a_1,a_2,a_3,a_4,a_5)$ is the unknown constant vector.

In the simulation, the physical parameters are chosen as $m_g=0.1$\,kg, $g=9.8$\,m/s$^2$, $C=1\times 10^{-4}$, $R=2\,\Omega$, $L=0.05$\,H, and $K_b = 0.01$\,V$\cdot$s/m. The reference trajectory is set to $y_r(t) = 0.5$\,m (levitating at a height of $50~cm$). The disturbance is set as $d(t) = 2\sin(0.5t)$. For the control law \eqref{adnon}, we can choose $\rho(e)=20+20e^2$ based on \eqref{rho-inquality-2}  to make the inequality \eqref{Uderivative} negative definite, $k_a=2$ is any positive number in \eqref{explicit-mas}, $\lambda_1=4.5$, $\lambda_2=6.5$ and $\lambda_3=3$ are chosen to make $A$ in \eqref{input-filter} to be Hurwitz. $m_1=1$, $m_2=6.9552$, $m_3=23.6871$,
$m_4=51.8675$, $m_5=80.7073$, $m_6=93.1412$, $m_7=80.7073$, $m_8=51.8675$, $m_9=23.6871$, and $m_{10}=6.9552$ are chosen to make $M$ in \eqref{explicit-mas1} defined in \eqref{MNINter} to be Hurwitz.
The initial position is $x_1(0)=0.2$\,m. The initial velocity is $x_2(0)=0$\,m/s. The initial current in the coil is $x_3(0)=-10$. The initial conditions for the controller are set to $\eta(0)=\textbf{0}_{10}$, $\hat{a}(0)=\textbf{0}_5$, and $\hat{k}(0)=0.5$.
 
The simulation results are shown in Figs.~\ref{fig:maglev-tracking}--\ref{fig:maglev-parameter}. Specifically, Fig.~\ref{fig:maglev-tracking} demonstrates that the tracking error converges to zero, indicating that the proposed method successfully drives the magnet to the target height. 
Fig.~\ref{fig:maglev-error} illustrates the desired vertical distance (air gap) between the magnet and the coil, showing that the magnet is successfully levitated to the target height of 0.5\,m. Finally, 
Fig.~\ref{fig:maglev-parameter} illustrates the evolution of the parameter estimates $\hat{a}(t)$, confirming the adaptation capability of the learning algorithm.

The repulsive magnetic levitation example is representative of a broader class of electromechanical systems with non-affine nonlinear coupling between mechanical states and electrical dynamics. 
While its state-space model does not strictly fall into the high-order normal form \eqref{second-nonlinear-systems} assumed throughout the paper, the proposed controller \eqref{nonpara-control} still achieves the prescribed regulation/tracking objective in this benchmark.
This observation suggests that the design may possess a degree of robustness to structural mismatches beyond the nominal model class. 
A rigorous extension of the analysis to cover such general nonlinear electromechanical systems—where the “next-state affine” structure is not satisfied—will be pursued in future work.

\section{Conclusion}\label{conlu}

We have presented a nonparametric learning solution for the robust output regulation of nonlinear systems via output feedback. 
This framework effectively converts the regulation task into a robust stabilization problem for systems with integral Input-to-State Stable (iISS) inverse dynamics. 
Unlike traditional adaptive methods, our approach eliminates the need for explicit regressor construction and strict Lyapunov function design for parameter learning. 
The approach is illustrated in three numerical examples, involving a repulsive magnetic levitation system, a virtual synchronous generator, and a marine surface vessel induce unknown steady-state chattering, showing convergence of the parameter estimation error and of the tracking error to zero.
It is worth noting that the repulsive magnetic levitation system in Subsection \ref{RRML} is not directly covered by the normal form \eqref{second-nonlinear-systems}, due to its non-affine structural characteristics and complex state-input coupling.
Nevertheless, the proposed control scheme remains effective and meets the objective, demonstrating its broad applicability to generalized nonlinear systems. 
Future work will focus on establishing a unified framework that explicitly accommodates such general electromechanical structures and provides corresponding stability and performance guarantees, as well as on validating the proposed methodology in experiments, including the control of impinging-jet mixers for the manufacturing of solid lipid nanoparticle.


\bibliographystyle{ieeetr}
\bibliography{myref}

\begin{IEEEbiography}[{\includegraphics[width=1in,height=1.25in, clip,keepaspectratio]{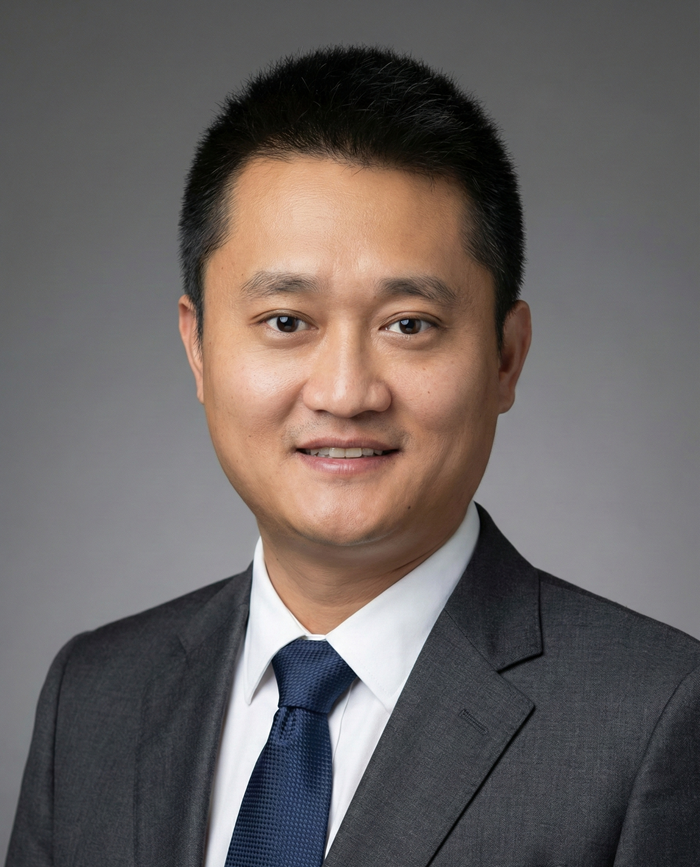}}]{Shimin Wang} 
received a Ph.D. from The Chinese University of Hong Kong. 
He held postdoctoral positions at the University of Alberta, Queen's University and Massachusetts Institute of Technology. His research interests include control engineering and applied mathematics with applications to advanced manufacturing systems. He is the recipient of the Best Conference Paper Awards at the IEEE International Conference on Information and Automation in 2018 and IEEE International Conference on Unmanned Systems in 2025, respectively.  He has also received numerous other honors, including the Best Poster Paper Award at the Nonlinear System and Control Conference in 2024, the MIT Kaufman Teaching Certificate in 2024, and the NSERC Post-Doctoral Fellow Award in 2022. 
\end{IEEEbiography}

\begin{IEEEbiography}[{\includegraphics[height=1.25in, clip,keepaspectratio]{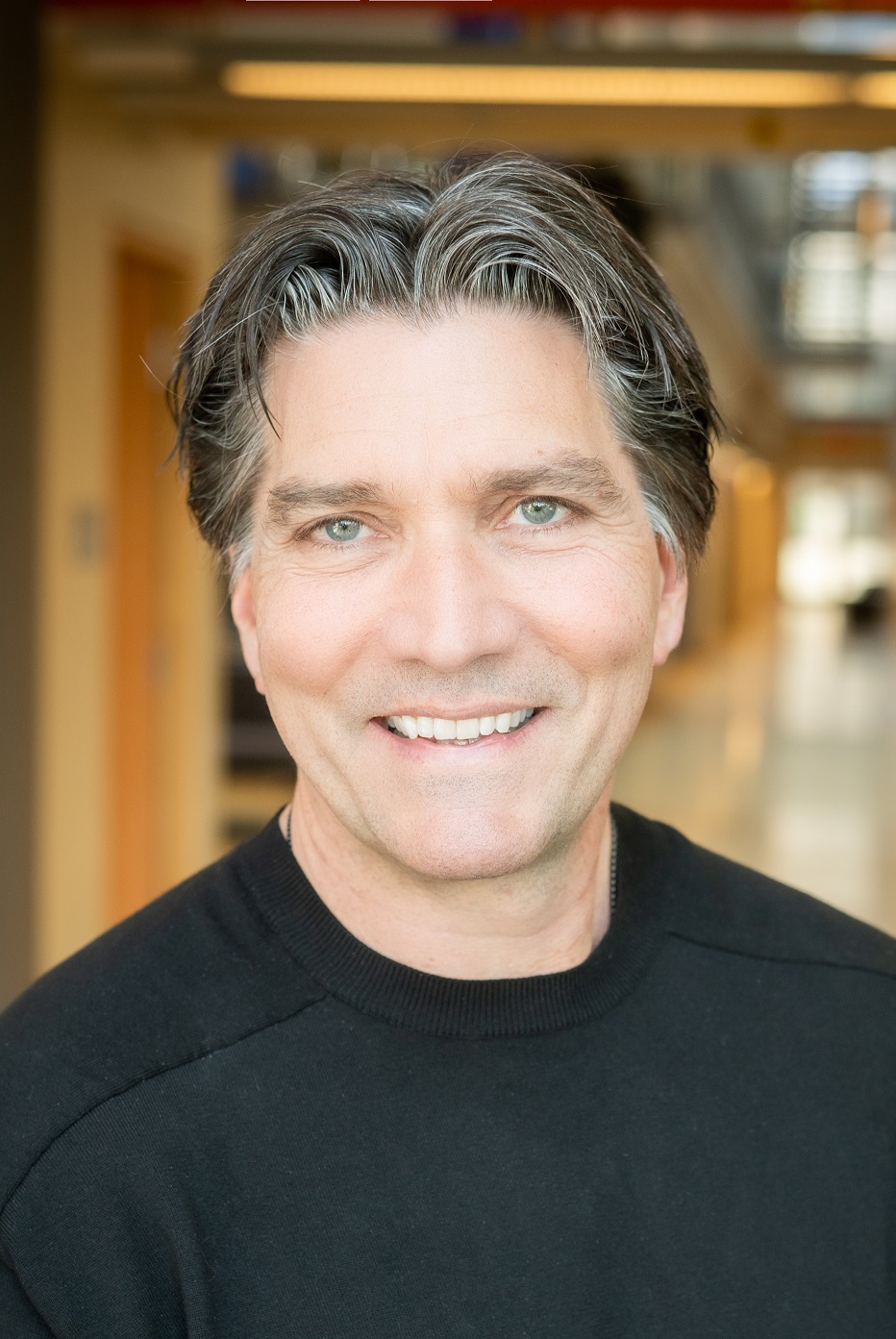}}]{Martin Guay} received a Ph.D. from Queen’s University, Kingston, ON, Canada in 1996. He is currently a Professor in the Department of Chemical Engineering at Queen’s University. His current research interests include nonlinear control systems, especially extremum-seeking control, nonlinear model predictive control, adaptive estimation and control, and geometric control. 

He was a recipient of the Syncrude Innovation Award, the D. G. Fisher from the Canadian Society of Chemical Engineers, and the Premier Research Excellence Award. He is a Senior Editor of IEEE Control Systems Letters. He is the Editor-in-Chief of the Journal of Process Control. He is also an Associate Editor for Automatica, IEEE Transactions on Automatic Control and the Canadian Journal of Chemical Engineering.
  \end{IEEEbiography}

\begin{IEEEbiography}[{\includegraphics[width=1in,height=1.25in, clip,keepaspectratio]{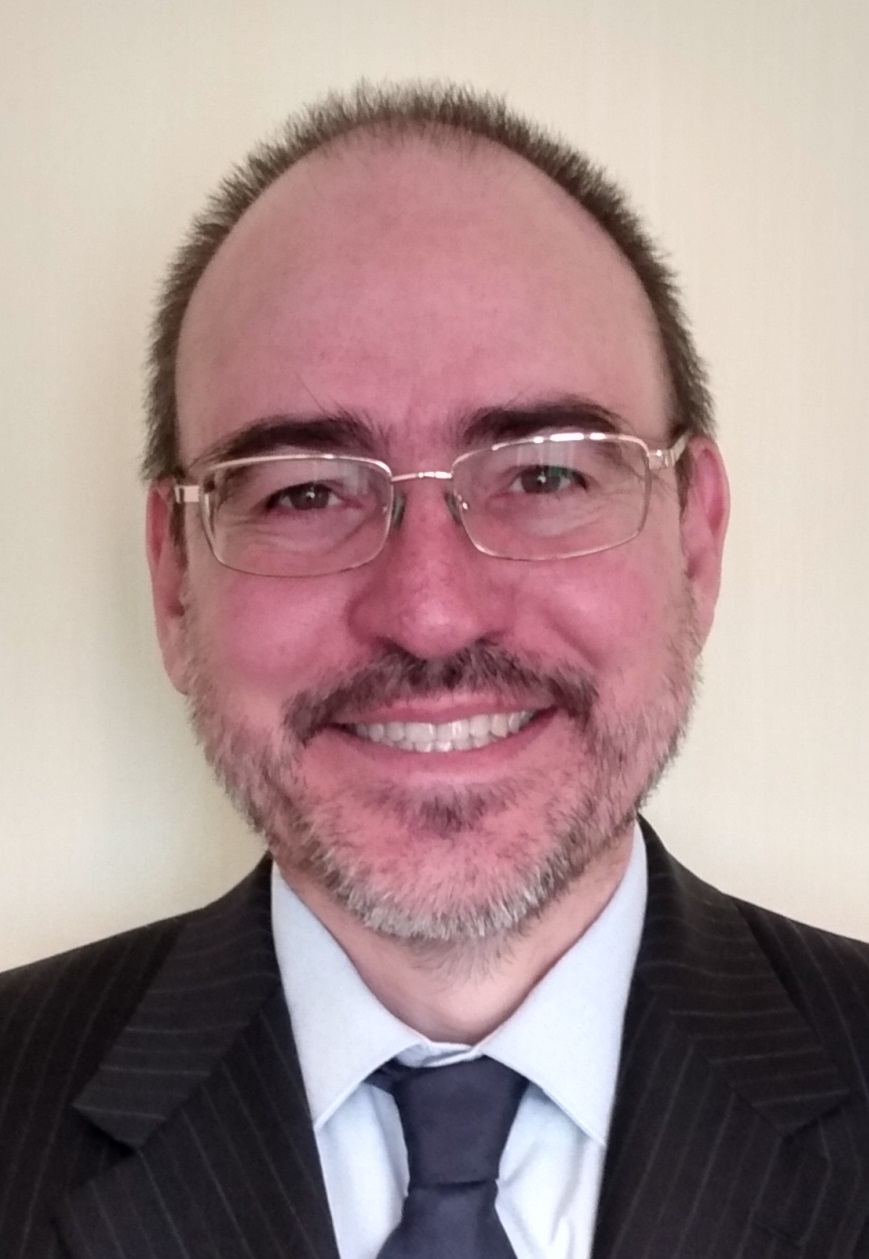}}]{Richard D. Braatz} is the Edwin R. Gilliland Professor at the Massachusetts Institute of Technology (MIT) where he does research in applied mathematics and robust control theory with applications to advanced manufacturing systems. He received an M.S. and Ph.D. from the California Institute of Technology and was on the faculty at the University of Illinois at Urbana–Champaign and was a Visiting Scholar at Harvard University before moving to MIT. He is a past Editor-in-Chief of IEEE Control Systems and a past President of the American Automatic Control Council. Honors include the AACC Donald P. Eckman Award, the Curtis W. McGraw Research Award from the Engineering Research Council, the Antonio Ruberti Young Researcher Prize, and best paper awards from IEEE- and IFAC-sponsored control journals. He is a member of the U.S. National Academy of Engineering and a Fellow of IEEE and IFAC.
\end{IEEEbiography}
\end{document}